\documentclass[11pt]{article}
\usepackage[left=1in, top=1in, right=1in, bottom=1in]{geometry}
\usepackage{amsmath, amssymb, amsthm, amsfonts}
\usepackage{mathabx}
\usepackage[numbers]{natbib}
\usepackage{float}
\usepackage{caption}
\usepackage{xfrac}
\usepackage[T1]{fontenc}
\usepackage{diagbox}

\usepackage{booktabs} 
\usepackage[ruled]{algorithm2e} 
\usepackage{algorithmic}

\SetAlFnt{\small}
\SetAlCapFnt{\small}
\SetAlCapNameFnt{\small}
\SetAlCapHSkip{0pt}
\IncMargin{-\parindent}

\newtheorem{theorem}{Theorem}[section]

\newtheorem{lemma}[theorem]{Lemma}
\newtheorem{proposition}[theorem]{Proposition}

\newtheorem{definition}[theorem]{Definition}

\usepackage{colortbl}

	\definecolor{amethyst}{rgb}{0.6, 0.4, 0.8}

\newcommand{\be}{\begin{equation}}
\newcommand{\ee}{\end{equation}}
\newcommand{\beq}{\begin{equation*}}
\newcommand{\eeq}{\end{equation*}}

\newcommand{\argmax}{\mathop{\rm argmax}}

\newcommand{\Z}{\mathbb{Z}}

\newcommand{\I}{\mathbb{I}}

\newcommand{\eps}{\varepsilon}

\newcommand{\AutoAdjust}[3]{\mathchoice{ \left #1 #2  \right #3}{#1 #2 #3}{#1 #2 #3}{#1 #2 #3} }
\newcommand{\Xcomment}[1]{{}}

\newcommand{\InParentheses}[1]{\AutoAdjust{(}{#1}{)}}
\newcommand{\InBrackets}[1]{\AutoAdjust{[}{#1}{]}}
\newcommand{\Ex}[2][]{\operatorname{\mathbf E}_{#1}\InBrackets{#2}}
\newcommand{\Exlong}[2][]{\operatornamewithlimits{\mathbf E}\limits_{#1}\InBrackets{#2}}
\newcommand{\Prx}[2][]{\operatorname{\mathbf{Pr}}_{#1}\InBrackets{#2}}
\newcommand{\Prlong}[2][]{\operatornamewithlimits{\mathbf{Pr}}\limits_{#1}\InBrackets{#2}}

\newcommand{\eqdef}{\overset{\mathrm{def}}{=\mathrel{\mkern-3mu}=}}
\newcommand{\vect}[1]{\ensuremath{\mathbf{#1}}}

\newcommand{\iprod}[2]{\langle #1, #2 \rangle}
\newcommand\restr[2]{{
  \left.\kern-\nulldelimiterspace 
  #1 
  \vphantom{\big|} 
  \right|_{#2} 
  }}
\def\prob{\Prx}
\def\Prl{\Prlong}


\newcommand{\weights}{\vect{w}}
\newcommand{\wweights}{\widetilde{\vect{w}}}
\newcommand{\polymatch}{\mathcal{F}_{_\mathcal{M}}}

\newcommand{\alg}{\textsf{ALG}}
\newcommand{\opt}{\textsf{OPT}}
\newcommand{\fopt}{\textsf{f-OPT}}
\newcommand{\exopt}{\textsf{ex-ante-OPT}}

\newcommand{\witu}{u^*}
\newcommand{\witv}{v^*}

\newcommand{\realedges}{\widetilde{E}}

\newcommand{\realincv}[1][v]{\widetilde{E}_{#1}}
\newcommand{\realincu}[1][u]{\widetilde{E}_{#1}}
\newcommand{\mincv}[1][v]{E_{\text{-}#1}}
\newcommand{\realhmincv}[1][v]{\widehat{E}_{\text{-}#1}}

\renewcommand{\emptyset}{\varnothing}

\newcommand{\feasible}{\mathcal{M}}
\newcommand{\bfamily}{\mathcal{B}}
\newcommand{\bfamilyvertex}{\mathcal{B}^{v.a.}}
\newcommand{\bfamilyedge}{\mathcal{B}^{e.a.}}

\newcommand{\vr}{\vect{r}}
\newcommand{\vs}{\vect{s}}
\newcommand{\vy}{\vect{y}}
\newcommand{\vrt}[1][t]{\vect{r}^{#1}}
\newcommand{\vwt}[1][t]{\vect{w}^{#1}}



\newcommand{\dist}{\mathbf{F}}
\newcommand{\dists}{\vect{\dist}}

\newcommand{\disti}[1][i]{{F_{#1}}}

\newcommand{\marg}[1][e]{x_{#1}}
\newcommand{\margv}{\vect{x}}

\newcommand{\margmatch}[1][e]{y_{#1}}
\newcommand{\margmatchv}{\vect{y}}

\title{Online Stochastic Max-Weight Matching:\\ 
prophet inequality for vertex and edge arrival models\thanks{This work is supported by Science and Technology Innovation 2030 –“New Generation of Artificial Intelligence” Major Project No.(2018AAA0100903), Innovation Program of Shanghai Municipal Education Commission, Program for Innovative Research Team of Shanghai University of Finance and Economics (IRTSHUFE) and the Fundamental Research Funds for the Central Universities. The first two authors are partially supported by the European Research Council (ERC) under the European Union's Horizon 2020 research and innovation program (grant agreement No. 866132), and by the Israel Science Foundation (grant number 317/17).}}

\author{Tomer Ezra\thanks{Tel Aviv University. Email:  \texttt{tomer.ezra@mail.tau.ac.il}, \texttt{mfeldman@tau.ac.il}}
\and
Michal Feldman\footnotemark[2]
\and
Nick Gravin\thanks{ITCS, Shanghai University of Finance and Economics. Email: \texttt{\{nikolai, tang.zhihao\}@mail.shufe.edu.cn}} \and Zhihao Gavin Tang\footnotemark[3] 
}

\begin{document}

\maketitle

\begin{abstract}
We provide {\em prophet inequality} algorithms for online weighted matching in general (non-bipartite) graphs, under two well-studied arrival models, namely edge arrival and vertex arrival. The weight of each edge is drawn independently from an a-priori known probability distribution.
Under edge arrival, the weight of each edge is revealed upon arrival, and the algorithm decides whether to include it in the matching or not. 
Under vertex arrival, the weights of all edges from the newly arriving vertex to all previously arrived vertices are revealed, and the algorithm decides which of these edges, if any, to include in the matching.
To study these settings, we introduce a novel unified framework of {\em batched prophet inequalities} that captures online settings where elements arrive in batches; in particular it captures matching under the two aforementioned arrival models.
Our algorithms rely on the construction of suitable {\em online contention resolution scheme} (OCRS).
We first extend the framework of OCRS to batched-OCRS, we then establish a reduction from batched prophet inequality to batched OCRS, and finally we construct batched OCRSs with selectable ratios of $0.337$ and $0.5$ for edge and vertex arrival models, respectively. 
Both results improve the state of the art for the corresponding settings. For the vertex arrival, our result is tight. 
Interestingly, a pricing-based prophet inequality with comparable competitive ratios is unknown.

\end{abstract}

\section{Introduction}
\label{sec:intro}
Online matching is a central problem in the area of online algorithms, and is extensively used in economics to model rapidly appearing online markets. 
Some prominent applications include matching platforms for ride sharing, healthcare (e.g., kidney exchange programs), job search, dating, and internet advertising. Internet advertising in particular is a killer application that has spurred a lot of research in the algorithmic and the algorithmic game theory communities on what came to be known as the ``adword problem'', introduced by~\citet{MehtaSVV07}, and studied extensively since (see, e.g., \cite{DevanurJ12,BuchbinderJN07}). The goal in this problem is to match online-appearing users' impressions to relevant advertisers. For a comprehensive recent survey on online matching and ad allocations, see \citet{Mehta13}.

The adword problem alone has instigated many interesting variants and theoretical results, which inherent many features of the original online matching model in bipartite graphs with \emph{one-sided vertex arrivals} considered in the seminal paper of~\citet{KarpVV90}. In this model, one side of the bipartite graph is fixed, and vertices on the other side arrive online. Upon arrival of a vertex, its edges to vertices on the other side are revealed, and a matching decision (i.e., whether to match and to which vertex) should be made immediately and irrevocably. Since the introduction of the online matching problem, other arrival models in bipartite graphs have been studied, including edge arrival models, and two-sided vertex arrivals, both of which extend naturally beyond bipartite graphs. In particular,~\citet{GamlathKMSW19} recently showed that for the case of bipartite graphs with edge arrivals, no online algorithm performs better than the straightforward greedy algorithm, which is $\frac{1}{2}$-competitive. On the positive side, they provide a $\frac{1}{2}+\Omega(1)$-competitive online matching algorithm for the model of general vertex arrivals introduced by~\citet{WangW15}. 

The results of \cite{WangW15} and \cite{GamlathKMSW19} apply to the unweighted matching problem (where every edge either exists or not). However, many of the online matching problems induced by real-life applications are {\em weighted} in nature, where every edge is associated with a real-valued {\em weight}. There has been a huge recent interest in studying the weighted variant of the online matching problem using a {\em Bayesian} approach, rather than the traditional worst-case approach (see, e.g.,~\cite{DevanurJSW19} and references therein).

In the Bayesian variant of the problem, one assumes stochastic instances of the problem, and the goal is to provide guarantees on the expected value of the matching. Indeed, there is a clear practical motivation for adapting stochastic models, as many of the online platforms keep accumulating huge amounts of statistical data and are heavily using it in their online decision making. In this paper we study the problem of online stochastic weighted matching in general graphs, under the two classical arrival models, namely edge arrival and vertex arrival. We study this problem within the framework of {\em prophet inequality}.
 
Prophet inequality is an important line of work that is highly relevant to online stochastic algorithms and algorithmic mechanism design, which addresses practical incentive issues that arise in many online market applications, in particular the need of the algorithm to solicit private information from the selfish agents.
Prophet inequality appeared first as a fundamental result in optimal stopping theory \cite{krengelS77,krengel1978semiamarts}. 
\citet{HajiaghayiKS07} were the first to realize the applicability of the prophet inequality framework within mechanism design applications. 
Later, \citet{ChawlaHMS10} applied the prophet inequality framework to the design of sequential posted price mechanisms that give approximately optimal revenue in a Bayesian multi-parameter unit-demand setting (BMUD).

An important ingredient in the result of \cite{ChawlaHMS10} is the first constant competitive (specifically, $\frac{4}{27}$) prophet inequality for the online weighted matching problem with edge arrival in a bipartite graph. In this version of the prophet inequality, the weight of every edge is drawn independently from an a-priory known probability distribution, and the edges arrive online in an unknown order. Upon arrival of an edge, its weight is revealed to the algorithm, which should in turn decide, immediately and irrevocably, whether to include this edge in the matching.
The goal is to maximize the expected total weight of the selected matching, and its performance is compared with the expected total weight of the matching selected by a {\em prophet}, who knows the weights of all the edges in advance, and thus selects the maximum weight matching for every realized instance of the graph. The algorithm of~\cite{ChawlaHMS10} is non adaptive, meaning that the online algorithm calculates thresholds for all the edges before observing any weights, and accepts every edge if and only if (i) it can be feasibly added to the current matching, and (ii) the weight of the edge exceeds (or equal to) its threshold. 

Later, \citet{KleinbergW19} introduced a general combinatorial prophet inequality for a broad class of Bayesian selection problems, where the feasible set is represented as an intersection of $k$ matroids. They found an adaptive\footnote{The algorithm calculates the threshold 
at the time of the element's arrival. This threshold depends on the arrival order and the weights of all previously appearing elements.} $\frac{1}{4k-2}$-competitive algorithm for this setting. They also showed that it can be used for the design of a truthful mechanism in the BMUD setting with more general feasibility constraints than in \cite{ChawlaHMS10}. The most recent result for bipartite matching with edge arrivals is a $\frac{1}{3}$-competitive non-adaptive prophet inequality by~\citet{GravinW19}. They also gave an upper bound of $\frac{4}{9}$, which establishes a clear separation between weighted and unweighted graphs (indeed, in the unweighted case, a simple greedy algorithm is $\frac{1}{2}$-competitive, even in a non-stochastic setting). Some of the aforementioned results are summarized in Table~\ref{tab:results}.

\begin{table}[]
	\begin{tabular}{l l l l l}
		\cline{1-4}
		\multicolumn{1}{| l |}{}                                                                  & \multicolumn{1}{c|}{Edge Arrival}                                                                                                                                                                                                              & \multicolumn{2}{c|}{ Vertex Arrival}                                                                                 &  \\ 
\cline{3-4}		
		\multicolumn{1}{| l |}{}                                                                  & \multicolumn{1}{c|}{Bipartite$\left . \middle/\vphantom{\sum_{a}^b} \right .$Any graphs}                                                                                                                                                                                                              & \multicolumn{1}{c|}{\begin{tabular}[c]{@{}c@{}}One Sided\\ (Bipartite graphs)\end{tabular}} & \multicolumn{1}{c|}{
\begin{tabular}[c]{@{}c@{}}Two Sided\\ (Bipartite)\end{tabular}$\left . \middle/\vphantom{\sum_{a}^b} \right .$Any graphs}                                                                                                                                                            &  \\ 		
\cline{1-4}
\multicolumn{1}{|l|}{\begin{tabular}[c]{@{}l@{}}Worst-case\\ (unweighted)\end{tabular}} & \multicolumn{1}{l|}{\begin{tabular}[c]{@{}l@{}}$\frac{1}{2}$ (tight)\\ bipartite \& any graphs\\ \cite{GamlathKMSW19}\end{tabular}}                                                                                                                        & \multicolumn{1}{l|}{\begin{tabular}[c]{@{}l@{}}$1-\frac{1}{e}$ (tight)\\ \cite{KarpVV90}\\  \end{tabular}}                                              & \multicolumn{1}{l|}{\begin{tabular}[c]{@{}l@{}}$\geq \frac{1}{2}+\Omega(1)$(any graph)\\ \cite{GamlathKMSW19}\\ \\ $\leq 0.591$(bipartite graphs)\\ \cite{BuchbinderST17}\end{tabular}} &  \\ 
\cline{1-4}
		\multicolumn{1}{|l|}{\begin{tabular}[c]{@{}l@{}}Bayesian\\ (weighted)\end{tabular}}     & \multicolumn{1}{l|}{\begin{tabular}[c]{@{}l@{}}$\geq \frac{1}{3}$ (bipartite graphs)\\ $\le \frac{4}{9}$ (bipartite graphs)\\ \cite{GravinW19}\\  \\  \boldmath$\geq 0.337$ {\bf(any graph)}\\ (Theorem~\ref{thm:improved-ocrs})\end{tabular}} & \multicolumn{1}{l|}{\begin{tabular}[c]{@{}l@{}}$\frac{1}{2}$ (tight)\\ \cite{FeldmanGL15}\\ \end{tabular}}                                                    & \multicolumn{1}{l|}{\begin{tabular}[c]{@{}l@{}} \boldmath$\ge\frac{1}{2}$ {\bf (any graph)} \\  (Theorem~\ref{thm:ocrs-vertex}) \\ \\ $\le\frac{1}{2}$ (tight)\\ even for one sided arrival\\ bipartite graphs\end{tabular}}                   &  \\ \cline{1-4}
		&                        &                                                                                        && 
	\end{tabular}
\caption{Competitive ratios for online matching: previous and new results. Our new results are indicated in bold face.\label{tab:results}}
\end{table}

\paragraph{Batched Prophet Inequality}
The common assumption in the literature on prophet inequalities is that the elements arrive one by one, thus the online algorithm makes a simple binary decision at the arrival of a new element. A notable exception is the setting of Bayesian combinatorial auctions considered by \citet{FeldmanGL15}.
In this scenario, multiple items should be allocated to $m$ selfish buyers, each of which has a potentially complex combinatorial (submodular) valuation over different sets of items. The valuation function of each buyer is drawn from an arbitrary a-priori known probability distribution, which are mutually independent across the $m$ buyers. The goal is to maximize the social welfare, i.e., the total sum of buyer values for their allocated sets, by a truthful mechanism. The proposed solution uses \emph{posted pricing} and follows a similar charging argument as~\citet{ChawlaHMS10} and~\citet{KleinbergW19}, by decomposing the welfare into revenue and surplus. 

Unlike the previous results on prophet inequality, the online algorithm in~\cite{FeldmanGL15}: (i) observes multiple elements at each time step, and thus should make a complex allocation decision (as opposed to a binary decision whether or not to accept the element), and (ii) the values of basic elements (allocation of item $j$ to buyer $i$) might be dependent across different items. To highlight the similarities and differences with other work on prophet inequalities, it is informative to consider the restriction of the combinatorial valuations in~\cite{FeldmanGL15} to the class of unit-demand buyers. In this case the setting and the proposed solution in~\cite{FeldmanGL15} correspond to the one-sided vertex arrival model of~\cite{KarpVV90},  
where elements (weighted edges of a bipartite graph) are revealed online in \emph{batches}, and the online algorithm can choose one element out of many. On the other hand, the prior distribution, unlike~\cite{KleinbergW19} or ~\cite{GravinW19}, does not need to be independent: a buyer's values for different individual items may be dependent.
 
In this work, we propose a new framework, termed ``\emph{Batched prophet inequality}'', that captures both the edge arrival model from~\cite{KleinbergW19,GravinW19} and the one-sided vertex arrival model from~\cite{FeldmanGL15}. Within this framework, we generalize the model of bipartite matching with one-sided vertex arrivals to two-sided vertex arrivals, and also extend our results for both the vertex and edge arrivals to general (non bipartite) graphs.
Unlike~\cite{KleinbergW19,GravinW19,FeldmanGL15}, who take a ``pricing/charging'' approach, our solution relies on two novel Online Contention Resolution Schemes (OCRS) for batched arrival settings.  

Contention Resolution Schemes (CRS) were  introduced by~\citet{ChekuriVZ14} as a powerful rounding technique in the context of submodular maximization. The CRS framework was extended to the OCRS framework for online stochastic selection problems by~\citet{FeldmanSZ16}, who provided OCRSs for different problems, including intersections of matroids and matchings. One particularly important application of OCRS was Prophet Inequality~\cite{FeldmanSZ16,LeeS18}. Specifically, for matching feasibility constraint, \citet{FeldmanSZ16} constructed a $\frac{1}{2e}$-OCRS that implies $\frac{1}{2e}$-competitive algorithm for the prophet inequality setting of matching with edge arrival.

\subsection{Our Results}
\label{sec:results}
We provide prophet inequalities for matching feasibility constraint in \emph{weighted general (non-bipartite)} graphs in the two fundamental online models of {\em vertex arrivals} and {\em edge arrivals}.
To obtain our results for these two different settings, we use a common approach that relies on two similar Online Contention Resolution Schemes (OCRS).
To this end, we introduce a novel unified framework of \emph{Batched-OCRS} that enables the analysis of settings where at each time step multiple elements arrive together as a batch, and a complex online decision should be made. This framework captures (among other models) both the vertex arrival and the edge arrival models in online matching. 

\paragraph{Reduction from prophet inequality to OCRS.}
Our first result is a general reduction from Batched-Prophet Inequality to Batched-OCRS for any downward-closed feasibility constraint.
This general reduction implies that to get prophet inequalities with a certain competitive ratio, it suffices to construct an OCRS with the same selectable ratio.
We then construct batched-OCRSs for both vertex and edge arrival models. 


\paragraph{Matching with vertex arrival.}
We present a simple $\frac{1}{2}$-batched OCRS, and thus a $\frac{1}{2}$-competitive prophet inequality algorithm for general (non-bipartite) graphs.
This result is tight (a matching upper bound is derived from the classical prophet inequality problem). 
Our competitive ratio holds also with respect to the stronger benchmark of the optimal fractional matching. Unlike bipartite graphs, the optimal fractional matching in general graphs may indeed have a strictly higher weight than any integral matching. 

An interesting implication of this result is that in the Bayesian setting there is no gap between the competitive ratio that can be obtained under the 1-sided and 2-sided vertex arrival models in bipartite graphs, or even under vertex arrival in general graphs. This is in contrast to the non-Bayesian (worst case) online model, where there is a gap between 1-sided and 2-sided vertex arrivals (see Table~\ref{tab:results}).

Our vertex arrival model restricted to the case of bipartite graphs is an importation of the two-sided vertex arrival model~\cite{WangW15, GamlathKMSW19} from the online matching literature to the online Bayesian selection problem. It generalizes the setting of~\cite{FeldmanGL15} for unit-demand buyers to the model where \emph{buyers and items} can both appear online. 

\paragraph{Matching with edge arrival.}
We revisit the prophet inequality matching problem with edge arrivals considered by~\citet{GravinW19} and~\citet{KleinbergW19}. While \cite{GravinW19} and \cite{KleinbergW19} took a pricing/threshold-based approach, we use the OCRS approach. 
We first show that a simple OCRS already gives a $\frac{1}{3}$-competitive algorithm, matching the previous best bound of~\cite{GravinW19} for bipartite graphs, and generalizing their result to general non-bipartite graphs\footnote{Note, however, that unlike~\cite{GravinW19}, the OCRS-based algorithm is adaptive.}. 
We further improve the competitive ratio to $0.337$ by constructing a better OCRS, which requires more subtle analysis. 
These results hold against the even stronger benchmark of the {\em ex-ante} optimal solution that satisfies fractional matching constraints (similar to the observation in~\citet{LeeS18}).

\subsection{Our Techniques}
\label{sec:techniques}
As mentioned earlier, the OCRS approach is not as common as the pricing approach in prophet inequality settings. We note that the earlier algorithms of~\citet{ChawlaHMS10} and \citet{Alaei11}, when applied to the classic prophet inequality setting, become a simple $\frac{1}{2}$-competitive algorithm that is indeed a $\frac{1}{2}$-OCRS. These algorithms also appear to be closer in spirit to our OCRS approach than to the more recent papers on prophet inequality (e.g., \cite{KleinbergW19,FeldmanGL15,GravinW19,DuettingFKL17,Lucier17,EdenFFS2018}). 

One of the reasons that OCRSs are not as prevalent in prophet inequality settings is that the formal definition of OCRSs is not specifically tailored for prophet inequalities. As a result, the approximation factors that are obtained by OCRSs are not as tight.
For example, the original OCRS introduced by~\citet{FeldmanSZ16} for matching feasibility constraint achieves a competitive ratio of $\frac{1}{2e}$, whereas even a non-adaptive pricing-based algorithm achieves the much better ratio of $\frac{1}{3}$ (\citet{GravinW19}). 

Indeed, these OCRSs are usually designed to work against a strong almighty adversary, who controls the arrival order of the elements and knows in advance the realization of the instance and the random bits of the algorithm. 
The OCRSs we construct in this work are better tailored to the prophet inequality setting as they are designed against a weaker oblivious adversary, who can select an arbitrary order of element arrivals, but does not observe the algorithm's decisions and the realization of the instance. 

Our $\frac{1}{3}$- and $\frac{1}{2}$-selectable OCRSs for respective edge and vertex arrival models are surprisingly simple and intuitive. Moreover, the latter OCRS already gives a tight result for the vertex arrival model. We note, however, that formulating a general OCRS framework for batched arrivals of elements is not as trivial as it might seem at a first glance. We give an example in Appendix~\ref{sec:bad_batched_OCRS} illustrating why a simpler and apparently more natural than ours extension of OCRS to the setting with batched arrivals can be problematic. 

Even more surprisingly, at the time of writing this paper, no pricing-based approach is known to match the $\frac{1}{2}$-competitive guarantee attainable by the OCRS for the vertex arrival model. Moreover, several natural attempts of generalizing the pricing scheme in \citet{FeldmanGL15} fail miserably, even for bipartite graphs. For example, one natural generalization would be to set the price on a new vertex $v$ to be half of the expected contribution of the future edges incident to $v$ to the optimum matching. As it turns out, this pricing scheme achieves a competitive ratio as small as $\frac{1}{4}$. This is demonstrated in Appendix~\ref{sec:pricing_algorithm}.

On the other hand, the $\frac{1}{3}$-OCRS for the edge arrival model is based on a simple union bound which still leaves some room for improvement. We improve the ratio of  $\frac{1}{3}$ to $0.337$ by bounding the negative correlation for any pair of events that vertex $u$ and vertex $v$ are matched at the time of edge $(uv)$ arrival. This turned out to be most technical part of the paper.


\subsection{Related Work}
There is an extensive literature regarding online matching and stochastic matching problems. Below we survey the studies that are most related to our work. Recently, the ``fully online matching" model has been studied~\citet{HuangKTWZZ18, AshlagiBDJSS19, HuangPTTWZ19}, motivated by ride sharing applications. This is a different vertex arrival model in which all vertices from a general graph arrive and depart online. It is possible to study the stochastic/prophet inequality version of the fully online model, which we leave as an interesting future direction.

\citet{GravinTW19} studied the online stochastic matching problem with edge arrivals (a.k.a. the unweighted version of the prophet inequality with edge arrivals in this paper) and achieved a $0.506$-competitive algorithm. The stochastic matching setting is also studied in the (offline) query-commit framework. The input of this problem is an (unweighted) graph associated with the existence probabilities of all edges.
The algorithm can query the existence of the edges in any order. However, if an edge exists, it has to be included into the solution. The Ranking algorithm by~\citet{KarpVV90} induces an $(1-\frac{1}{e})$-competitive algorithm for this problem on bipartite graphs. \citet{CostelloTT12} provided a $0.573$- competitive algorithm on general graphs and proved a hardness of $0.898$. \citet{GamlathKS19} provided a $1-\frac{1}{e}$-competitive algorithm for the weighted version of this problem.

Online contention resolution schemes have also been studied in settings beyond worst case arrivals. \citet{Adamczyk018} considered the random order model and constructed $\frac{1}{k+1}$-OCRS for intersections of $k$ matroids. \citet{LeeS18} constructed optimal $\frac{1}{2}$-OCRS and $(1-\frac{1}{e})$-OCRS for matroids with arbitrary order and random order, respectively. Offline contention resolution schemes for matching have also attracted attention due to its applications in submodular maximization problems~\cite{ChekuriVZ14, FeldmanNS11, BruggmannZ19}, and the connection between the correlation gap and contention resolution schemes~\cite{GuruganeshL17}. We refer the interested readers to \citet{BruggmannZ19} for a comprehensive recent survey on the topic.

Since prophet inequality problems were first introduced \citet{krengelS77,krengel1978semiamarts,samuel1984comparison}, many variants have been developed over the years. A recent line of work has considered sample based variants, where the distributions of the values are not given explicitly, and the challenge is to provide good competitive ratios using a limited number of samples  \citet{azar2014prophet,correa2019prophet,correa2020two,ezra2018prophets,rubinstein2019optimal}.
Another related line of work, initiated in \citet{kennedy1985optimal,kennedy1987prophet,kertz1986comparison}, has considered multiple-choice prophet inequalities, and was later extended to combinatorial settings such as matroid (and matroids intersection) \citet{KleinbergW19,azar2014prophet}, polymatroids \citet{dutting2015polymatroid}, and general downward closed feasibility constrains \citet{rubinstein2016beyond}. 

\subsection{Paper Roadmap} \citet{FeldmanSZ16} define the notion of online contention resolution scheme (OCRS) to study settings where elements arrive online, and establish a reduction from prophet inequality to OCRS. 
In Section \ref{sec:prelim} we extend the OCRS and prophet inequality frameworks to settings where elements arrive online in batches. We begin by introducing the general setting of batched arrival. In Section~\ref{sec:batched-ocrs} we extend the notion of OCRS to batched-OCRS.
In Section~\ref{sec:batched-pi} we extend the notion of prophet inequality to batched prophet inequality. 
In Section~\ref{sec:reduction-pi-ocrs} we establish a reduction from batched prophet inequality to batched OCRS.
In Section~\ref{sec:matching} we present a natural special case of batched prophet inequality, namely graph matching prophet inequality.  Then, in Sections ~\ref{sec:vertex-arrival} and~\ref{sec:edge-arrival} we construct   OCRSs for vertex and edge arrival models, respectively. Finally, upper bounds on the competitive ratios for the prophet inequality with edge arrivals are provided in Section~\ref{sec:edge_lower}. Section~\ref{sec:future} concludes this paper with a list of open problems and future directions.

\section{Model and Preliminaries}
\label{sec:prelim}
Let $E$ be a set of elements, and let $\feasible$ be a downward closed family of feasible subsets of $E$, i.e., if $S\in\feasible$, then $S'\in\feasible$ for any $S'\subseteq S$. 
The elements in $E$ are partitioned into $T$ disjoint sets (batches) $B_1, \ldots, B_{T}$ that arrive online in the order from batch $B_1$ to batch $B_T$. I.e., at time $t$, all elements of batch $B_t$ appear simultaneously. 
The partition of elements into the batches and their arrival order $\InParentheses{B_t}_{t\in[T]}$ should conform to a certain structure formally specified by a family of all feasible ordered partitions $\bfamily$ of $E$. 

Some examples of feasible ordered partitions include the following: (i) all batches in $\bfamily$ are required to be singletons, (ii) given a partial order $\pi$ on $E$, a feasible ordered partition $\InParentheses{B_t}_{t\in[T]}\in\bfamily$ is required to have $\pi(e_t)\le\pi(e_s)$ for any $e_t\in B_t, e_s\in B_s$ where $t<s$, (iii) suppose the set of elements $E$ consists of the edges of a bipartite graph $G=(L,R;E)$, and each batch $B_t$ must contain all edges incident to a vertex $u\in L$.

\subsection{Batched OCRS}
\label{sec:batched-ocrs}
For a given family of feasible batches $\bfamily$, 
consider a {\em sampling scheme} that selects a random subset $R\subseteq E$ as follows:
at time $t$, all elements of batch $B_t$ arrive, of which a random subset $R_t \subseteq B_t$ is {\em realized}. 
The realized sets $R_1, \ldots, R_{T}$ are mutually independent.
$R$ is then defined as the random set $R\eqdef \bigsqcup_{t\in [T]}R_t$. 

\citet{FeldmanSZ16} introduce the notion of $c$-selectable {\em online contention resolution scheme} (OCRS), as an online selection process that selects a feasible subset of $E$ such that every realized element $e \in R$ is selected with probability at least $c$, for the special case where every batch is a singleton.
We extend the definition of \citet{FeldmanSZ16} to batched OCRSs as follows.


\begin{definition}[$c$-selectable batched OCRS]
An online selection algorithm  $\alg$ is a batched OCRS with respect to a sampling scheme $R$ if it selects a set $I_t \subseteq R_t$ at every time $t$ such that $I \eqdef \bigsqcup_{t\in [T]} I_t$ is feasible (i.e., $I \in \feasible$).
It is called a $c$-selectable batched OCRS (or in short $c$-batched-OCRS) if:
\begin{equation}
\Prlong[I]{ e \in I_t ~\big\vert~ R_t=S} \ge c  \quad \text{ for all } t \in [T], S\subseteq B_t,   \mbox{ and } e\in S.
\label{eq:cocrs_int}
\end{equation}
	\label{def:batched_ocrs}
\end{definition}
The algorithm $\alg$ does not know the complete partition into batches and the arrival order of future batches. It only knows the general structure $\bfamily$.
Thus, at time $t$, $\alg$ chooses $I_t$  based on $B_1, \ldots, B_t$, and $R_1, \ldots, R_t$.

\subsection{Batched Prophet Inequality} 
\label{sec:batched-pi}
In batched prophet inequality, every element $e\in E$ has a weight $w_e$. Let $\vwt\eqdef\InParentheses{w_e}_{e \in B_t}$, and $\weights\eqdef\InParentheses{\vwt}_{t\in [T]}$. Weights are unknown a-priori, but for every $t$, $\vwt$ is independently drawn from a known (possibly correlated) distribution $\disti[t]$, and $\weights \sim \dists \eqdef\prod_{t\in [T]}\disti[t]$; I.e., we allow dependency within batches, but not across batches. Let $\weights(S) =\sum_{e \in S} w_e$ for any set $S\subseteq E$. As standard, let $\dists_{-t}= \prod_{i \neq t} \disti[i]$.
The particular partition of elements into batches and their order are a-priori unknown\footnote{
Note that $\dists$ might impose some constraints on the partition into batches: elements whose weights are dependent must belong to the same batch. No constraint is imposed on elements whose weights are independent and on the order of batches.}, except, of course, that $\InParentheses{B_t}_{t\in[T]}$ must conform to the general structure of $\InParentheses{B_t}_{t\in[T]}\in\bfamily$.
All elements of a batch $B_t$ and their weights $\vwt$ are revealed to the algorithm at time $t$.
We assume that the arrival order of the batches is decided by an oblivious adversary, i.e., the adversary can select an arbitrary  partition and order of arrival of the batches in $\bfamily$, but does not see the realization of the weights $w_e$ and the algorithm's decisions\footnote{The oblivious adversary is a standard assumption in the literature on online algorithms in stochastic settings.}. 
Let $\opt$ be a function that given weights $\weights$ returns a feasible set of maximum weight (i.e., $\opt(\weights) \in \arg\max_{S\in \feasible} \weights(S)$)\footnote{We assume that $\opt$ is deterministic (if a given weight vector $\weights$ induces multiple feasible sets of maximal weight, $\opt(\weights)$ returns one of them consistently).}.

\begin{definition} [$c$-batched-prophet inequality]
\label{def:prophet}
A batched-prophet inequality algorithm $\alg$ is an online selection process that selects at time $t$ a set  $I_t \subseteq B_t$ such that $I \eqdef \bigsqcup_{t\in [T]} I_t$ is feasible (i.e., $I \in \feasible$).
We say that $\alg$ has competitive ratio $c$ if
\begin{equation*}
\Exlong[\weights,I]{\weights(I)} \geq c \cdot \Exlong[\weights]{\weights(\opt(\weights))}.
\end{equation*}
\end{definition}



\subsection{Reduction: Prophet Inequality to OCRS} 
\label{sec:reduction-pi-ocrs}
We define a random sampling scheme $R(\weights,\dists)$ for $\weights\sim\dists$ as follows.
Let $R_t(\vwt,\dists)\eqdef B_t \cap \opt(\vwt,\wweights^{(t)})$ be the random subset of $B_t$ where
$\wweights^{(t)} \sim \dists_{-t}$ are generated independently of $\weights$, and $R(\weights,\dists)\eqdef \bigsqcup_{t \in [T]} R_t(\vwt,\dists)$. 
Note that:
 
\begin{enumerate}
\itemsep0em
	\item 
	The distribution of $R(\weights,\dists)$ is a product distribution over the random variables $R_t(\vwt,\dists)$.
	\item Since $\dists$ is a product distribution,  $(\vwt,\wweights^{(t)})\sim\dists$.
	\item  $\forall t \in [T]$, $R_t(\vwt,\dists)=R(\weights,\dists)\cap B_t$ has the same distribution as $\opt(\weights)\cap B_t$, where $\weights \sim \dists$.
	\item $\forall t \in [T]$,  $R_t(\vwt,\dists)\in\feasible$, and $\opt(\vwt,\wweights^{(t)}) \in \feasible$. But, $R(\weights,\dists)$ might not belong to $\feasible$.\label{property4}
	\item For every $t\in [T]$,
	\begin{equation}
		\Exlong[\weights, R]{\weights(R_t(\vwt,\dists))} = \Exlong[\weights]{\weights(\opt(\weights) \cap B_t)}. 		\label{eq:expected_marginal_int}
	\end{equation}
\end{enumerate}
 
\begin{theorem} [reduction from $c$-batched prophet inequality to $c$-batched OCRS]
	\label{thm:reduction_integral}
For every set $\bfamily$ of feasible ordered partitions, given a $c$-batched OCRS for the sampling scheme $R(\weights,\dists)$ with $\weights\sim\dists$, there is a batched prophet inequality algorithm for $\weights\sim\dists$ with competitive ratio $c$.
\end{theorem}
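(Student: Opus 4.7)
My plan is to run the given $c$-batched OCRS as a subroutine, feeding it a realized set that is simulated internally from the sampling scheme $R(\weights,\dists)$. Concretely, upon the arrival of batch $B_t$ with weight vector $\vwt$, the algorithm will draw a fresh independent sample $\wweights^{(t)}\sim\dists_{-t}$, compute the ``realized'' set $R_t=B_t\cap\opt(\vwt,\wweights^{(t)})$, and pass $(B_t,R_t)$ to the OCRS, which returns some $I_t\subseteq R_t$. The final output is $I=\bigsqcup_{t\in[T]}I_t$, and feasibility $I\in\feasible$ is immediate from Definition~\ref{def:batched_ocrs}.

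Next I will analyze the expected weight by linearity: $\E[\weights(I)]=\sum_t\sum_{e\in B_t}\E[w_e\cdot\I[e\in I_t]]$. Fix $t$ and $e\in B_t$; I will condition on the value of $R_t$ and write
\begin{equation*}
\E\bigl[w_e\cdot\I[e\in I_t]\bigr]\;=\;\sum_{S\subseteq B_t,\,S\ni e}\Pr[R_t=S]\cdot\E\bigl[w_e\cdot\I[e\in I_t]\,\big|\,R_t=S\bigr].
\end{equation*}
The crux is to argue that, conditional on $R_t$, the random variables $w_e$ and $\I[e\in I_t]$ are independent, so this conditional expectation factorizes. Indeed, $w_e$ is a function of $\vwt$ alone, while $\I[e\in I_t]$ is a function of $R_1,\ldots,R_t$, the batches $B_1,\ldots,B_t$, and the OCRS's internal coins; since each earlier $R_s$ depends only on $\vwt[s]$ and its fresh sample $\wweights^{(s)}$, both drawn independently of $\vwt$, and since the $R_s$'s are mutually independent (property~1 of the sampling scheme), the random vector $\vwt$ stays independent of $(R_{<t},\text{OCRS coins})$ even after conditioning on $R_t$. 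Combined with the $c$-selectability bound $\Pr[e\in I_t\mid R_t=S]\ge c$ for $e\in S$, factorization yields
\begin{equation*}
\E\bigl[w_e\cdot\I[e\in I_t]\bigr]\;\ge\;c\cdot\E\bigl[w_e\cdot\I[e\in R_t]\bigr].
\end{equation*}

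Summing over $e\in B_t$ then gives $\E[\weights(I_t)]\ge c\cdot\E[\weights(R_t)]$. Invoking identity~(\ref{eq:expected_marginal_int}), $\E[\weights(R_t)]=\E[\weights(\opt(\weights)\cap B_t)]$, and summing over $t\in[T]$ delivers $\E[\weights(I)]\ge c\cdot\E[\weights(\opt(\weights))]$, using that $\{B_t\}_{t\in[T]}$ partitions $E$. I expect the only subtle point to be the careful bookkeeping for the conditional independence of $w_e$ and the OCRS's output given $R_t$: one must keep track of which sources of randomness enter which quantities, and confirm that $\vwt$ influences $I_t$ only through $R_t$. Everything else reduces to routine application of the five structural properties of the sampling scheme listed in the preliminaries.
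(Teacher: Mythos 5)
Your proposal is correct and follows essentially the same route as the paper's proof: simulate the sampling scheme by drawing $\wweights^{(t)}\sim\dists_{-t}$ fresh at each step, feed $R_t=B_t\cap\opt(\vwt,\wweights^{(t)})$ to the OCRS, condition on $R_t$, exploit the conditional independence of $w_e$ and the OCRS output given $R_t$, apply $c$-selectability, and sum via identity~\eqref{eq:expected_marginal_int}. The paper states the conditional-independence step as a one-line assertion (``Since $I_t$ and $\weights$ are independent given that $R_t=S$''); you correctly identify this as the crux and spell out why it holds --- $\vwt$ is unconditionally independent of $(R_{<t},\text{OCRS coins})$, and since $R_t$ is a function of $(\vwt,\wweights^{(t)})$ alone, conditioning on $R_t$ preserves that independence --- which is a welcome elaboration but not a different argument.
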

\begin{proof}	Consider the following online algorithm:	
	
\begin{algorithm}
\caption{Reduction from $c$-batched prophet inequality to $c$-batched OCRS}
\label{alg:integral}
\begin{algorithmic}[1]
	\FOR{$t\in\{1,...,T\}$} 
	\STATE Let $\vwt$ be the weights of elements in $B_t$
	\STATE Resample the weights $\wweights^{(t)} \sim \dists_{-t}$ 
	\STATE Let $R_t\gets \opt(\vwt,\wweights^{(t)}) \cap B_t$ \label{step:r}
	\STATE $I_t \gets$  $c$-OCRS($B_1,\ldots , B_t,R_1,\ldots,R_t$)\quad\quad for the structure $\bfamily$ of batches. 
	\ENDFOR	
	\STATE Return $I=\bigsqcup_{t \in [T]} I_t$
	\end{algorithmic}
\end{algorithm} 
Let $I$ be the random set returned by Algorithm \ref{alg:integral}, and $R_t$ (and resp. $R=\bigsqcup_{t\in[T]} R_t$) be the sets defined in step \ref{step:r} of the algorithm. It holds that
$$
\Exlong[\weights,R,I]{\weights(I)}  =   
\sum_{t \in[T]}  \Exlong[\weights,R,I]{\weights(I_t)} 
 =   \sum_{t \in[T]} \sum_{S\subseteq B_t} \Exlong[\weights,R,I]{\weights(I_t)~\Big \vert~ R_t=S} \Prl[\weights,R]{R_t=S}.
$$
Since $I_t$ and $\weights$ are independent given that $R_t=S$, we also have. 
\begin{eqnarray*}
	\Exlong[\weights,R,I]{\weights(I)} 
	 & = & \sum_{t \in[T]} \sum_{S\subseteq B_t} \Exlong[\weights,I]{\sum_{e \in B_t}w_e \cdot \Prl[I]{e \in I_t}~\bigg\vert~ R_t=S} \Prl[\vwt,R_t]{R_t=S}\\
	  & \stackrel{\eqref{eq:cocrs_int}}{\ge} & \sum_{t \in[T]} \sum_{S\subseteq B_t} \Exlong[\vwt]{\sum_{e \in S}w_e \cdot  c ~\bigg\vert~ R_t=S} \Prl[\vwt,R_t]{R_t=S}\\
	 & = & c \sum_{t \in[T]} \sum_{S\subseteq B_t} \Exlong[\weights]{\weights(S) ~\big\vert~ R_t=S} \Prl[\weights,R_t]{R_t=S} \\ 
	 & = & c \sum_{t \in[T]}  \Exlong[\weights,R]{\weights(R_t) } 
	  \stackrel{\eqref{eq:expected_marginal_int}}{=}  c \sum_{t \in[T]}  \Exlong[\weights]{\weights(\opt(\weights) \cap B_t)} 
	  =  c \cdot \Exlong[\weights]{\weights(\opt(\weights))}\qedhere
\end{eqnarray*}
\end{proof}

\section{Graph Matching}
\label{sec:matching}
An interesting special case of prophet inequality is the problem of selecting a matching in a graph.
Given a graph $G=(V,E)$ (not necessarily bipartite), the elements of the prophet inequality setting are the edges $e\in E$, and the family of feasible sets $\feasible$ is given by all matchings in $G$, i.e., $M\subseteq E$ is feasible $M\in\feasible$ iff  $e_1\cap e_2=\emptyset$ for any $e_1,e_2\in M$. 

We consider two different online arrivals models: (i) \emph{vertex arrival} and (ii) \emph{edge arrival}, which are natural special cases of our general framework of batched prophet inequality. 
%

\paragraph{Vertex arrival model.}
In the vertex arrival model, the vertices arrive in an arbitrary unknown order $\sigma$: $v_{\sigma(1)},\ldots,v_{\sigma(n)}$, where $v_{\sigma(i)}$ is the vertex arriving at time $i$. Upon arrival of vertex $v_{\sigma(i)}$, the weights on the edges from $v_{\sigma(i)}$ to all previous vertices $v_{\sigma(j)}$, where $j<i$, are revealed to the algorithm. The online algorithm must make an immediate and irrevocable decision whether to match $v_{\sigma(i)}$ to some available vertex $v_{\sigma(j)}$ such that $j<i$ (in which case $v_{\sigma(i)}$ and $v_{\sigma(j)}$ become unavailable), or leave $v_{\sigma(i)}$ unmatched (in which case $v_{\sigma(i)}$ remains available for future matches). 
Let $B_i^{\sigma} \eqdef \{(v_{\sigma(i)}v_{\sigma(j)}) ~\vert~ j<i \}$. The set of feasible ordered partitions for the vertex arrival model is
$$
\bfamilyvertex \eqdef \{(B_1^{\sigma},\ldots,B_{|V|}^{\sigma})\}_{\sigma \in S_V},
$$
where $S_V$ is the set of permutations over $V$.


\paragraph{Edge arrival model.}
In the edge arrival model, the edges arrive in an arbitrary unknown order $\sigma$: $e_{\sigma(1)}, \ldots, e_{\sigma(|E|)}$.
Upon arrival of edge $e=(uv)$, the algorithm must decide whether to match it (provided that $u$ and $v$ are still unmatched), or leave $e$ unmatched potentially saving $u$ and/or $v$ for future matches.
Let $B_i^{\sigma}$ be the singleton $\{e_{\sigma(i)}\}$. 
The set of feasible ordered partitions for the edge arrival model is
$$
\bfamilyedge \eqdef \{(B_1^{\sigma},\ldots,B_{|E|}^{\sigma})\}_{\sigma \in S_E},
$$
where $S_E$ is the set of permutations over $E$. 

\vspace{0.2cm}


Another extreme case is where all edges arrive in a single batch $B_1= E$. Then the online algorithm is no different than the offline algorithm, which can select the optimal offline solution $\opt(\weights)$. 

For the special case of the family of matching feasible sets, one can also consider {\em fractional matchings} $\margmatchv=(\margmatch)_{e\in E}$ specified by the matching polytope 
$$
\polymatch\eqdef\{\margmatchv ~\vert~ \forall v\in V~~ \sum_{u\in V}\margmatch[(uv)]\le 1,~\forall e\in E~~ \margmatch\ge 0\}.
$$ 
Every feasible ordered partition (that belongs to either $\bfamilyvertex$ or $\bfamilyedge$) induces a random variable $R(\weights,\dists)$ as defined in Section~\ref{sec:reduction-pi-ocrs} with respect to the set of feasible matchings $\feasible$.
Let $\marg[uv]^{opt}$ be the probability that $(uv) \in R(\weights,\dists)$, where $\weights \sim \dists$, and let $\margv^{opt} = (\marg[uv]^{opt})_{(uv) \in E}$.
Note that for any edge $(uv)$, $\marg[uv]^{opt}$ is precisely the probability that the edge $(uv)$ is in $\opt(\weights)$, where $\weights \sim \dists$. Therefore, $\margv^{opt} \in \polymatch$. 
Furthermore, by Property \ref{property4} in Section \ref{sec:reduction-pi-ocrs}, it holds that $|R_t(\vwt,\dists)| \leq 1$. 

In Section \ref{sec:vertex-arrival} we construct a $1/2$-batched OCRS for vertex arrival with respect to every $R$ whose corresponding $\margv$ belongs to $\polymatch$. Since $\margv^{opt} \in \polymatch$, we get a batched prophet inequality with competitive ratio $1/2$ for vertex arrival. Similarly, the $0.337$-batched OCRS for edge arrival in Section \ref{sec:edge-arrival} implies a batched prophet inequality with competitive ratio $0.337$ for edge arrival. 

\paragraph{Guarantees against stronger benchmarks.}
The guarantee in Definition~\ref{def:prophet} can be strengthen to hold against the stronger benchmark of the optimal fractional matching.
This extensions of the definition of batched OCRS and the reduction from batched prophet inequality to batched OCRS are deferred to Appendix~\ref{sec:fractional-ocrs}.
The construction of the OCRS for the setting of fractional matching in the vertex arrival model is deferred to Appendix~\ref{sec:vertex-arrival-fractional}.
In Appendix \ref{app:implications} we show  that for the edge arrival model, our construction actually gives an approximation to an even stronger benchmark, known as the ex-ante relaxation.

\section{A $1/2$-Batched OCRS for Matching with Vertex Arrival}
\label{sec:vertex-arrival}

In this section we construct a $\frac{1}{2}$-batched OCRS for the vertex arrivals. By the reduction in Theorem~\ref{thm:reduction_integral}, the constructed batched OCRS gives a batched prophet inequality with competitive ratio $1/2$ with respect to the optimal matching.

For every vertex $v$, let $R_v$ be an independent random subset of $B_v^{\sigma}$ generated by the sampling scheme, and let $R = \sqcup_v R_v$. For every edge $(uv)\in E$, let $\marg[uv] \eqdef \prob{(uv) \in R}$.
We write $u<v$ if vertex $u$ arrives before vertex $v$ in the vertex arrival order $\sigma$. 
\begin{theorem}
	If $R$  satisfies the following two conditions:
	\begin{equation}
	\sum_{u}\marg[uv]  \leq 1 \quad  \mbox{for every $v \in V$} \label{eq:sum_marg_integral}
	\end{equation}
	\begin{equation}
	|R_v| \leq 1 \quad \mbox{for every $v \in V$ and every realization of } R_v \label{eq:sumruv_integral}
	\end{equation} 
	Then, $R$ admits a $1/2$-batched OCRS for the $\bfamilyvertex$ structure of batches.
\label{thm:ocrs-vertex}
\end{theorem}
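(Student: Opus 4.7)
The plan is to construct a simple adaptive scheme with per-edge acceptance probabilities $\alpha_{uv}$ chosen so that, conditioned on $R_v=\{(uv)\}$, the probability of including $(uv)$ in $I_v$ is exactly $\tfrac{1}{2}$. Since $|R_v|\le 1$ by condition~(\ref{eq:sumruv_integral}), at most one candidate edge appears in each batch, and since $v$ just arrived it is automatically unmatched; the only obstacle is that the earlier-arriving endpoint $u$ may already be matched. Writing $q_u(t)$ for the (marginal) probability that $u$ is unmatched right after step~$t$ under the scheme, I would set
\[
  \alpha_{uv}\;=\;\frac{1}{2\,q_u(\sigma(v)-1)},
\]
and whenever $R_v=\{(uv)\}$ the algorithm accepts $(uv)$ iff $u$ is currently unmatched \emph{and} an independent Bernoulli trial with parameter $\alpha_{uv}$ succeeds. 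Because $R_v$ is independent of all previously revealed batches and prior coin flips, the selection probability becomes
\[
  \Prlong[I]{(uv)\in I_v \;\big|\; R_v=\{(uv)\}}\;=\;q_u(\sigma(v)-1)\cdot\alpha_{uv}\;=\;\tfrac{1}{2},
\]
which is the required bound.

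The main quantitative step is to prove by induction on $t$ the explicit formula
\[
  q_u(t)\;=\;1-\tfrac{1}{2}\sum_{w\in V_t\setminus\{u\}} x_{uw}\qquad\text{for every $u$ with $\sigma(u)\le t$,}
\]
where $V_t$ denotes the set of vertices arrived by time $t$. Together with condition~(\ref{eq:sum_marg_integral}), which gives $\sum_w x_{uw}\le 1$, this bound immediately yields $q_u(t)\ge\tfrac{1}{2}$, so $\alpha_{uv}\le 1$ and the scheme is well-defined.

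The inductive step splits into two cases when $v$ arrives at time $t$. For a previously arrived $u\neq v$, independence of $R_v$ from the past gives $q_u(t)=q_u(t-1)(1-x_{uv}\alpha_{uv})=q_u(t-1)-x_{uv}/2$, which by the inductive hypothesis amounts to adjoining the term $x_{uv}$ to the summation index set. For $u=v$, the events ``$R_v=\{(u'v)\}$, $u'$ is unmatched at $t-1$, and the coin lands heads'' are disjoint across $u'\in V_{t-1}$, hence
\[
  q_v(t)\;=\;1-\sum_{u'\in V_{t-1}} x_{u'v}\,q_{u'}(t-1)\,\alpha_{u'v}\;=\;1-\tfrac{1}{2}\sum_{u'\in V_{t-1}} x_{u'v},
\]
which matches the claimed formula since $V_t\setminus\{v\}=V_{t-1}$.

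The main subtlety, rather than a deep obstacle, is the independence argument: the ``unmatched'' status of $u$ at time $t-1$ depends only on realizations in batches $B_1,\dots,B_{t-1}$ and their attached coin flips, all of which are independent of the fresh sample $R_v$, so the factorization of probabilities in the updates is justified. Once this is in place, the rest is a telescoping linear update sustained by condition~(\ref{eq:sum_marg_integral}); condition~(\ref{eq:sumruv_integral}) enters only through the guarantee that each batch $B_v$ contributes at most one candidate edge.
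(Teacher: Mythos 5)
Your proposal is correct and follows essentially the same route as the paper: your $\alpha_{uv}=\tfrac{1}{2q_u(\sigma(v)-1)}$ coincides with the paper's $\alpha_u(v)=\bigl(2-\sum_{z<v}x_{uz}\bigr)^{-1}$ once the inductive formula $q_u(t)=1-\tfrac12\sum_{w}x_{uw}$ is established, and your induction on the unmatched-probability is just a restatement of the paper's induction on $\Prx{(uz)\text{ is matched}}=x_{uz}/2$. No gaps.
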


Note that $R$ defined in Section~\ref{sec:reduction-pi-ocrs} in the specific case of matching with vertex arrivals (see Section~\ref{sec:matching}) satisfies Equations \eqref{eq:sum_marg_integral} and \eqref{eq:sumruv_integral}.

\begin{proof}
	Upon the arrival of a vertex $v$, we compute  $\alpha_{u}(v)$ for every $u<v$ as follows:
\begin{equation}
\alpha_{u}(v) \eqdef \frac{1}{2 - \sum_{z<v}\marg[uz] }\leq
\frac{1}{2 - \sum_{z}\marg[uz] }
 \stackrel{\eqref{eq:sum_marg_integral}}{\leq} 1.
\label{eq:alpha_uv_vertex_integral}
\end{equation}
Note that $\alpha_u(v)$ cannot be calculated before the arrival of $v$.
We claim that the following algorithm is a $\frac{1}{2}$-batched OCRS with respect to $R$: 

\begin{algorithm}
	\caption{$1/2$-batched OCRS for vertex arrival}
	\label{alg:vertex-integral}
	\begin{algorithmic}[1]
		\FOR{$v\in\{1,...,|V|\}$} 
		\STATE Calculate $\marg[uz]=\prob{(uz) \in R}$ for all $u,z<v$ and $\alpha_u(v)$ for all $u<v$.
		\STATE Match the edge $(uv) \in R_v$ (if $R_v\neq\emptyset$) with probability $\alpha_{u}(v)$ if $u$ is unmatched.
		\ENDFOR
	\end{algorithmic}
\end{algorithm} 

Note that Algorithm~\ref{alg:vertex-integral} is well defined, since by Equation \eqref{eq:alpha_uv_vertex_integral}, $\alpha_{u}(v) \leq 1$ and Algorithm~\ref{alg:vertex-integral} matches no more than one vertex to $v$ by \eqref{eq:sumruv_integral}. It remains to show that Algorithm \ref{alg:vertex-integral} is a $1/2$-batched OCRS with respect to $R$.
We fix the arrival order $\sigma$. We prove by induction (on the number of vertices $|V|$) that  $\prob{(uv) \mbox{ is matched}}=\frac{\marg[uv]}{2}$. The base of the induction for $|V|=0$ is trivially true. To complete the step of the induction, we assume that 
$\prob{(uz) \mbox{ is matched}}=\frac{\marg[uz]}{2}$ for all $u,z<v$ and will show that $\prob{(uv) \mbox{ is matched}}=\frac{\marg[uv]}{2}$ for all $u<v$. In what follows, we say that ``$u$ is unmatched {\em at $v$}" if $u$ is unmatched {\em right before} $v$ arrives. 
 
\begin{equation}
\prob{u \mbox{ is unmatched at }v} = 1-\sum_{z<v}\prob{(uz) \mbox{ is matched}} = 1-\frac{1}{2}\sum_{z<v}\marg[uz],  
\label{eq:u_unmatched_integral} 
\end{equation}
where the second equality follows from the induction hypothesis. Therefore,
\begin{eqnarray*}
\label{eq:pr-e-matched_vertices_integral}
\prob{(uv) \mbox{ is matched}} & =  & \prob{u \mbox{ is unmatched at }v} \cdot \prob{(uv) \in R_v} \cdot \alpha_{u}(v) \\ 
  &\stackrel{\eqref{eq:alpha_uv_vertex_integral},\eqref{eq:u_unmatched_integral}}{=} & \left(1-\frac{1}{2}\sum_{z<v}\marg[uz]\right) \cdot\frac{1}{2 - \sum_{z<v}\marg[uz] } \cdot \marg[uv] =  \frac{\marg[uv]}{2}.\quad\quad\quad
\end{eqnarray*}


In order to prove that Algorithm \ref{alg:vertex-integral} is a $\frac{1}{2}$-batched OCRS with respect to $R$, we need to show 
that  $\prob{(uv) \in I_v \mid R_v =\{(uv)\}} \ge 1/2$ for every $u<v$. Indeed,
\begin{eqnarray*}
\prob{(uv) \in I_v \mid R_v =\{(uv)\}} & = &  \prob{u \mbox{ is unmatched at }v} \cdot \alpha_u(v) \\ &\stackrel{\eqref{eq:alpha_uv_vertex_integral},\eqref{eq:u_unmatched_integral}}{=} & \left(1-\frac{1}{2}\sum_{z<v}\marg[uz]\right) \cdot\frac{1}{2 - \sum_{z<v}\marg[uz] }  =  \frac{1}{2}.\quad\quad\quad\qedhere 
\end{eqnarray*}
\end{proof}


\paragraph{Computational aspects.}
Here we discuss how our algorithm can be implemented efficiently. Note that given the probabilities $\{x_{uv}\}_{(uv)\in E}$, we can calculate  $\{\alpha_u(v)\}_{(uv)\in E}$ by Equation~\eqref{eq:alpha_uv_vertex_integral}. Thus our batched OCRS can be implemented in polynomial time, if we are  explicitly given the probability density functions of each element in $R$. However, the batched OCRS in the reduction to prophet inequality, grants us only sample access to $R$. In a sense, it is a problem of calculating the value and estimating basic statistics of the maximum weighted matching benchmark. If we have only a sample access to $R$, we can still apply standard Monte-Carlo algorithm to estimate $x_{uv}$'s within arbitrary additive accuracy (with high probability), which leads to estimation of $\{\alpha_u(v)\}_{(uv)\in E}$ within arbitrary multiplicative accuracy (by Equation \eqref{eq:alpha_uv_vertex_integral} and the fact that $\alpha_u(v) \geq \frac{1}{2}$). This gives us a $(\frac{1}{2}-\epsilon)$-batched OCRS that runs in $\text{poly}(|V|, \frac{1}{\epsilon})$ time.


\section{A $0.337$-OCRS for Matching with Edge Arrival}
\label{sec:edge-arrival}
In this section we construct a $c$-OCRS for the edge arrival model. 
We start with a warm-up in Section \ref{sec:edge-warmup}, establishing a $\frac{1}{3}$-OCRS. In Section \ref{sec:edge-improved} we present an improved $0.337$-OCRS, using subtle observations about correlated events. 
Our results imply a prophet inequality with competitive ratio $0.337$ for the edge-arrival model. In Appendix~\ref{app:implications} we show that this guarantee holds also with respect to the optimal {\em ex-ante} matching (which is a stronger benchmark; stronger even than the optimal {\em fractional} matching).

In batched OCRS we define a sampling scheme $R$ (independent across batches), which in turn defines corresponding marginals $\margv$. 
If every batch consists of a single element (as in the model of matching with edge arrival), any vector of marginal probabilities $\margv \in [0,1]^{E}$ induces the unique sampling scheme $R$. Hence, $R$ is described by $\margv \in [0,1]^{E}$.
Based on Section~\ref{sec:matching}, for the special case of matching with edge arrivals, it suffices to construct a $c$-OCRS for every $\margv \in \polymatch$. 

Let $\margv\eqdef(\marg)_{e\in E}\in\polymatch$ be any probability vector in the (fractional) matching polytope (see Section \ref{sec:matching}).
Let $\sigma$ be an arbitrary (unknown) order of the edges.
Let $R=\sqcup_{e\in E}R_{\sigma(e)}$ be a sampling scheme that  independently generates $R_{e}$ for each edge $e$ as follows. $R_{e}=\{e\}$ with probability $\marg[e]$, and $R_{e}=\emptyset$ otherwise.
To simplify notation, we sometimes use $\marg[uv]$ to denote $\marg[(uv)]$ in $\margv\in\polymatch$.
Recall that the definition of $c$-OCRS requires the selected set $I$ to be feasible, and each element $e \in E$ to satisfy $\prob{e \in I \mid e \in R_e} \ge c$. That is, the probability that $e$ is selected given that it is in $R_e$ should be at least $c$.
Our algorithm will actually guarantee the last inequality with equality, namely that $\prob{e \in I \mid e \in R_e} = c$ for all $e\in E$. 
In the description of the algorithm and throughout this section, we write ``{\em at $(uv)$}'' or ``{\em at $e$}'' as a shorthand notation to indicate the time right {\em before} the arrival of the edge $e=(uv)$.


\begin{algorithm}
	\caption{$c$-OCRS for edge arrival}
	\label{alg:edge-integral}
	\begin{algorithmic}[1]
		\STATE At the arrival of edge $(uv)$ 
		\STATE $\quad$ Given the arrival order $\sigma_{<(uv)}$ of the edges preceding $(uv)$, calculate $\prob{u, v \mbox{ are unmatched at }(uv)}$
		\STATE $\quad$ Define 
		\begin{equation}
		\alpha_{(uv)}\eqdef \frac{c}{\prob{u,v \mbox{ are unmatched at }(uv)}}
		\label{eq:alpha_uv}
		\end{equation}
		\STATE	$\quad$ If (i) $u,v$ are unmatched, and (ii) $(uv)\in R_{(uv)}$, then match $(uv)$ with probability $\alpha_{(uv)}$.
	\end{algorithmic}
\end{algorithm}

%
%
%

Note that the term $\prob{u,v \mbox{ are unmatched at }(uv)}$ involves both randomness from $R$ and from previous steps of our algorithm.

It holds that:
\begin{equation}
\prob{(uv) \mbox{ is matched} \mid (uv) \in R_{(uv)}} = \prob{u, v \mbox{ are unmatched at }(uv)}  \cdot \alpha_{(uv)} \stackrel{\eqref{eq:alpha_uv}}{=}c,
\label{eq:pr-e-matched}
\end{equation}
which satisfies the inequality required by $c$-OCRS (Equation~\eqref{eq:cocrs_int}).


It remains to show that Algorithm \ref{alg:edge-integral} is well-defined, i.e., that $\alpha_{e} \le 1$ for all $e \in E$. 
In Section \ref{sec:edge-warmup} we show that $c=\frac{1}{3}$ can be proved using a relatively simple analysis. 
In Section \ref{sec:edge-improved} we present a more involved analysis showing that one can improve $1/3$ to $c=0.337$. 

\paragraph{Computational aspects.} The computation of $\{x_e\}_{e \in E}$ is similar to the vertex arrival setting. In fact, we can work with a stronger benchmark of the ex-ante relaxation in the edge arrival setting, which is easier from the computational view point and admits a polynomial time algorithm that finds $\{x_e\}_{e \in E}$ as the solution to the ex-ante relaxation. 
By contrast to the vertex arrival setting, given $\{x_{uv}\}_{(uv)\in E}$, it might take exponential time to precisely calculate $\{\alpha_{(uv)}\}_{(uv)\in E}$ in Algorithm~\ref{alg:edge-integral}. We still can use Monte-Carlo method to estimate $\alpha_{(uv)}$'s within arbitrary multiplicative accuracy (by the fact that $\alpha_{(uv)} \geq \frac{1}{3}$), which results in a $(0.337-\epsilon)$-OCRS that runs in $\text{poly}(|V|,\frac{1}{\epsilon})$ time.

\subsection{Warm-up: $\frac{1}{3}$-OCRS}
\label{sec:edge-warmup}
\begin{theorem}
	There is a $\frac{1}{3}$-OCRS for matching in general graphs with edge arrivals.
\end{theorem}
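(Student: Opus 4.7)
The plan is to show that Algorithm~\ref{alg:edge-integral} with $c=\frac{1}{3}$ is well-defined (i.e., $\alpha_{(uv)}\le 1$ for every edge $(uv)$), since equation~\eqref{eq:pr-e-matched} then immediately gives the $c$-selectability required by Definition~\ref{def:batched_ocrs}. In view of~\eqref{eq:alpha_uv}, this reduces to establishing the lower bound
\[
\Prx{u,v\text{ are unmatched at }(uv)} \;\ge\; c \;=\; \tfrac{1}{3}
\]
for every edge $(uv)\in E$, no matter what arrival order $\sigma$ the oblivious adversary picks.

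I would prove this by induction on the position of $(uv)$ in the arrival order. The induction hypothesis is that every edge $e'$ arriving strictly before $(uv)$ satisfies
\[
\Prx{e'\text{ is matched}} \;=\; c\cdot \marg[e'],
\]
which follows from the identity $\Prx{e'\text{ matched}} = \Prx{e'\text{ matched}\mid e'\in R_{e'}}\cdot \Prx{e'\in R_{e'}} = c\cdot \marg[e']$ (the conditional probability equals $c$ exactly by~\eqref{eq:pr-e-matched}, using the induction hypothesis at the time edge $e'$ was processed to certify that $\alpha_{e'}\le 1$). The base case where $(uv)$ is the first arriving edge is trivial since $\Prx{u,v\text{ unmatched at }(uv)}=1$.

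For the inductive step, note that the event ``$u$ is matched at $(uv)$'' is the disjoint union, over neighbors $w\neq v$ with $(uw)\prec (uv)$, of the events ``$(uw)$ is matched''. Applying the induction hypothesis and the matching polytope constraint $\sum_{w}\marg[uw]\le 1$ from $\margv\in\polymatch$ yields
\[
\Prx{u\text{ matched at }(uv)} \;=\; \sum_{w:\,(uw)\prec (uv)} c\cdot \marg[uw] \;\le\; c,
\]
and symmetrically $\Prx{v\text{ matched at }(uv)}\le c$. A union bound then gives
\[
\Prx{u,v\text{ unmatched at }(uv)} \;\ge\; 1 - 2c.
\]
Plugging $c=\frac{1}{3}$ yields $\Prx{u,v\text{ unmatched at }(uv)} \ge \frac{1}{3}=c$, hence $\alpha_{(uv)}\le 1$, completing the induction and verifying the $\frac{1}{3}$-selectability.

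The only place any real work hides is confirming that the matched-probability recursion $\Prx{e'\text{ matched}}=c\,\marg[e']$ is legitimately available from the induction (this requires that $\alpha_{e'}\le 1$ was already established when $e'$ was processed, which is exactly what the induction provides). The bound is tight for this argument precisely because union bound is used: the events ``$u$ is matched'' and ``$v$ is matched'' may be positively correlated, in which case $1-2c$ is wasteful. This observation is the bottleneck and motivates the refined analysis in Section~\ref{sec:edge-improved}, which exploits \emph{negative} correlation between the matching status of the two endpoints to push $c$ above $\frac{1}{3}$; for the warm-up, however, no such subtlety is needed.
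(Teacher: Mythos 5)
Your proposal is correct and follows essentially the same argument as the paper: induction over the arrival order to certify $\alpha_e\le 1$ (equivalently, that each earlier edge is matched with probability exactly $c\cdot x_e$), disjointness of the events that $u$ is matched to its various neighbors together with the matching-polytope constraint to bound each endpoint's matched probability by $c$, and a union bound giving $1-2c=c$ at $c=\tfrac{1}{3}$. Your closing remark about the union bound being the bottleneck and negative correlation driving the improvement to $0.337$ matches the paper's discussion as well.
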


\begin{proof}
Let $c=\frac{1}{3}$. We prove by induction on the number of edges that all $\alpha_{e} \le 1$. The base case $|E|=1$ is trivial, since $\prob{u, v \mbox{ are unmatched at }e}=1$ and $\alpha_{e}=\frac{1}{3}$. Let us prove the induction step.  We can assume by the induction hypothesis that $\alpha_{e}\le 1$ for every edge $e\in E$ but the last arriving edge $(uv)$. To finish the induction step we need to show that $\alpha_{(uv)}\le 1$. 
Recall that our algorithm matches each edge $e$ preceding $(uv)$ with probability $c \cdot x_e$. Therefore,
\be
\label{eq:vertexMatched}
\prob{u \mbox{ is matched at } (uv)} = \sum_{s\neq v} c \cdot x_{us} \le c\quad\text{and}\quad
\prob{v \mbox{ is matched at } (uv)} = \sum_{s\neq u} c \cdot x_{sv} \le c.
\ee
Indeed, the events that $u$ is matched to the vertex $s$ for each $s\in V\setminus\{v\} $ are disjoint, $\prob{u \text{ matched to }s}=c\cdot x_{us}$, and $\sum_{s} x_{us} \le 1$; similar argument applies to $\prob{v \mbox{ is matched at } (uv)}.$  
By the union bound, we have 
\[
\prob{u,v \mbox{ are unmatched at } (uv)} \ge 1 - \prob{u \mbox{ is matched at } (uv)} - \prob{v \mbox{ is matched at } (uv)} \ge 1- 2c.
\]
For $c=1/3$, $1-2c=c$. Thus, 
\[
\prob{u,v \mbox{ are unmatched at } (uv)} \geq c\quad\text{and}\quad\alpha_{(uv)} = \frac{c}{\prob{u,v \mbox{ are unmatched at } (uv)}} \le 1,
\] 
as desired. This concludes the proof.
\end{proof}

\subsection{Improved Analysis: $0.337$-OCRS}
\label{sec:edge-improved}
In order to improve the competitive ratio beyond $1/3$, we strengthen the lower bound on the probability that $u,v$ are unmatched at $(uv)$. We again apply the same inductive argument as in the warm-up, but use more complex estimate on $\prob{u,v \mbox{ are unmatched at } (uv)}$ than a simple union bound. We denote 
\[
x_u\eqdef\sum_{s\notin\{u,v\}}  x_{us}\le 1\quad\text{and}\quad x_v\eqdef\sum_{s\notin\{u,v\}}  x_{sv}\le 1.
\]
Similar to \eqref{eq:vertexMatched} we have
\be
\label{eq:vertexMatchedRefined}
\prob{u \mbox{ is matched at } (uv)} \le c\cdot x_u\quad\text{and}\quad
\prob{v \mbox{ is matched at } (uv)}  \le c\cdot x_v.
\ee
Hence, by the inclusion-exclusion principle we have
\begin{eqnarray}
& & \prob{u,v \mbox{ are unmatched at } (uv)} \nonumber \\
& = & 1 - \prob{u \mbox{ is matched at } (uv)} - \prob{v \mbox{ is matched at } (uv)} + \prob{u,v \mbox{ are matched at } (uv)} \nonumber \\
& \ge & 1 - c \cdot (x_u + x_v) +  \prob{u,v \mbox{ are matched at } (uv)}. 
\label{eq:prob-unmatched}
\end{eqnarray}
If the matching statuses of $u$ and $v$ were independent, the bound \eqref{eq:prob-unmatched} would be $1-c(x_u + x_v) + c^2 x_u x_v \ge 1-2c+c^2$, and equating it to $c$ would yield $c \approx 0.382$. However, it is possible that the events that $u$ and $v$ are matched are negatively correlated. 
The following lemma gives a non-trivial lower bound on this correlation. 
This is the most technical lemma in this paper; its proof is the content of Section \ref{subsec:anti-correlation}. 
\begin{lemma}
	\label{lem:anti_correlation}
\[
\prob{u,v \mbox{ are umatched at } (uv)} \ge 1-2c+ \frac{c^2}{2} \cdot \left( \frac{1-2c}{1-c} \right)^2.
\]
\end{lemma}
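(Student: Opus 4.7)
My plan is to keep the same inductive scaffolding as the warm-up (the base case is trivial, and under the inductive hypothesis $\alpha_{e'}\le 1$ for every $e'$ preceding $(uv)$ we have $\prob{e' \text{ matched}} = c\,x_{e'}$ exactly by~\eqref{eq:pr-e-matched}), and to start from the inclusion-exclusion identity~\eqref{eq:prob-unmatched}
\[\prob{u,v \text{ unmatched at }(uv)} = 1 - c(x_u+x_v) + \prob{u,v \text{ both matched at }(uv)}.\]
Whereas the warm-up used the trivial $\prob{u,v \text{ both matched}}\ge 0$, the whole improvement to $0.337$ must come from establishing a strictly positive lower bound on this term of the form $\tfrac{c^2}{2}\bigl(\tfrac{1-2c}{1-c}\bigr)^2$, which should hold regardless of $x_u,x_v$ after optimization.

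Since ``$u$ matched'' and ``$v$ matched'' can co-occur only through two vertex-disjoint edges $(us)$ and $(vt)$ (with $s\ne t$ and $s,t\notin\{u,v\}$) both preceding $(uv)$, I would next decompose
\[\prob{u,v \text{ both matched at }(uv)} \;=\; \sum_{s\ne t}\prob{(us) \text{ matched} \wedge (vt) \text{ matched}}\]
and lower-bound each pairwise joint probability individually. Assuming WLOG $(us)$ arrives before $(vt)$, the central observation is a \emph{positive correlation}: conditioning on $(us)$ being matched locks out the vertex $s$, so any edge $(vs)$ arriving between $(us)$ and $(vt)$ can no longer be matched, which can only make $v$ (and $t$) more likely to be unmatched when $(vt)$ arrives. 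Because $\alpha_{(vt)}$ is calibrated to the \emph{unconditional} probability of $v,t$ being unmatched at $(vt)$, this blockage translates into a multiplicative boost of the prototypical form $\prob{(vt) \text{ matched}\mid(us) \text{ matched}} \ge c\,x_{vt}/(1-c\,x_{vs})$ in the single-overlap configuration, and into analogous improved bounds whenever further shared neighbors intervene.

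The third step is to sum the pairwise bounds and minimize the total over all configurations $\{x_{us},x_{vs}\}_s$ subject to the matching-polytope constraints $\sum_s x_{us}\le 1$, $\sum_s x_{vs}\le 1$, and $x_{us}+x_{vs}\le 1$ for every common neighbor~$s$; the factor $\tfrac{1-2c}{1-c}$ in the target is the signature of the normalization $\alpha_e\approx c/(1-c)$ at the worst-case instance, while the $\tfrac12$ should emerge from the aggregation over symmetric $(us)$-first vs.\ $(vt)$-first orderings. The main technical obstacle I anticipate is making the positive-correlation boost simultaneously valid in the presence of many overlapping neighbors: a single triple is straightforward, but when $u$ and $v$ share many common neighbors one must carefully track the interference between conditioning on $(us)$ matched and on various other edges $(us')$ matched, so as not to double-count the induced negative correlations. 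I expect the extremal instance to be one where $u$ and $v$ share their entire neighborhood with $x_{us_i}=x_{vs_i}$ equal across all neighbors, reducing the final optimization to a one-dimensional calculation yielding exactly $\tfrac{c^2}{2}\bigl(\tfrac{1-2c}{1-c}\bigr)^2$.
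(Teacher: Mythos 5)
Your setup (induction, inclusion--exclusion, optimization over the polytope at the end) is the same as the paper's, but the central technical step is handled completely differently, and as written your version has a genuine gap.

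You decompose $\Prl{u,v\text{ both matched at }(uv)}$ exactly as $\sum_{s\ne t}\Prl{(us)\text{ matched}\wedge(vt)\text{ matched}}$ and then propose a ``positive-correlation boost'' of the form $\Prl{(vt)\text{ matched}\mid(us)\text{ matched}}\ge c\,x_{vt}/(1-c\,x_{vs})$. This inequality is never established, and it is not at all clear it is true: conditioning on ``$(us)$ matched'' does not merely lock out $s$, it also conditions on $u$ having been unmatched at $(us)$ and on all the algorithm's coin flips along the way, and the induced effect on the status of other shared neighbors of $u$ and $v$ can go in either direction. You yourself flag this as ``the main technical obstacle,'' but that obstacle is essentially the whole content of the lemma; without a proof of it, nothing after it can go through.

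The paper sidesteps this difficulty with a different device. Rather than bound $\Prl{(us)\text{ matched}\wedge(vt)\text{ matched}}$ directly, it lower-bounds $\Prl{u,v\text{ both matched}}$ by the probability that both $u$ and $v$ possess \emph{witnesses}: vertices $\witu,\witv$ for which $(u\witu)$ (respectively $(v\witv)$) is the \emph{only} active edge at $\witu$ (respectively $\witv$). Having a witness is a sufficient (not necessary) condition for being matched, so this undercounts, but in exchange it has a clean product structure because it depends only on the independent per-edge activations at $\witu$ and $\witv$, not on the full downstream matching process. The paper then uses a resampling coupling $\pi_u,\pi_v$ to sample candidate witnesses, establishes a conditioning inequality (Lemma~\ref{lem:conditioning}) to decouple the conditioning, proves the per-vertex no-active-edge bound $\Prl[\widetilde{E}_r]{(rs)\text{ not active }\forall s}\ge\frac{1-2c}{1-c}$ (Lemma~\ref{lem:not-realized}), and only then carries out the polytope optimization (Lemma~\ref{lem:lastfx}). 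Your intuition about where $\frac{1-2c}{1-c}$ comes from (a normalization $\alpha_e\approx c/(1-c)$) and where the $\tfrac{1}{2}$ comes from (aggregation over the symmetric worst case $x_{us}=x_{vs}$) is close to what happens in that final step, but you cannot get there without first replacing the unproven positive-correlation claim with something like the witness construction, or with an actual proof of the correlation inequality you assert.
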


The bound in Lemma \ref{lem:anti_correlation} leads to the construction of the improved $0.337$-OCRS.

\begin{theorem}
	\label{thm:improved-ocrs}
	There is a $0.337$-OCRS for general graphs with edge arrivals.
\end{theorem}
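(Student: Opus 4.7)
The plan is to combine Lemma~\ref{lem:anti_correlation} with the same inductive scheme already used for the warm-up $\frac{1}{3}$-OCRS. Fix $c = 0.337$ and run Algorithm~\ref{alg:edge-integral} with this constant. We argue by induction on the number of edges that $\alpha_e \le 1$ for every edge $e$. The base case is identical to the warm-up, and in the inductive step for the last-arriving edge $(uv)$ the hypothesis gives $\prob{e \text{ matched}} = c \cdot x_e$ for every edge $e$ preceding $(uv)$. This yields $\prob{u \text{ matched at } (uv)} \le c\, x_u$ and $\prob{v \text{ matched at } (uv)} \le c\, x_v$, which are exactly the estimates used in \eqref{eq:vertexMatchedRefined} to prepare the ground for Lemma~\ref{lem:anti_correlation}.

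By definition \eqref{eq:alpha_uv}, checking $\alpha_{(uv)} \le 1$ is equivalent to showing $\prob{u, v \text{ unmatched at } (uv)} \ge c$. Applying Lemma~\ref{lem:anti_correlation}, this reduces to verifying the purely numerical inequality
$$1 - 2c + \frac{c^2}{2}\left(\frac{1-2c}{1-c}\right)^2 \ge c.$$
A direct computation at $c = 0.337$ gives a left-hand side of roughly $0.3397$, which exceeds $c = 0.337$ with a small positive margin; indeed, $0.337$ is essentially the largest $c$ for which the above cubic-type expression remains nonnegative. This closes the induction and proves the theorem.

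The main obstacle is not the theorem itself, which is essentially a corollary of Lemma~\ref{lem:anti_correlation} together with the warm-up machinery, but rather Lemma~\ref{lem:anti_correlation}, whose proof in Section~\ref{subsec:anti-correlation} carefully quantifies the negative correlation between the events ``$u$ is matched at $(uv)$'' and ``$v$ is matched at $(uv)$''. The only subtlety in assembling the proof from Lemma~\ref{lem:anti_correlation} is ensuring that the inductive hypothesis $\prob{e \text{ matched}} = c \cdot x_e$ is preserved at the current value of $c$; this follows verbatim from \eqref{eq:pr-e-matched}, since that identity holds for any $c$ for which $\alpha_e \le 1$ throughout the execution.
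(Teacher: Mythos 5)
Your proposal is correct and follows essentially the same route as the paper: the theorem is indeed just the warm-up induction with the union bound replaced by Lemma~\ref{lem:anti_correlation}, reducing to the numerical fact that $1-2c+\frac{c^2}{2}\left(\frac{1-2c}{1-c}\right)^2 \ge c$ at $c=0.337$ (the paper takes $c$ to be the exact root of the corresponding equation, which lies slightly above $0.337$, so your direct verification with a small positive margin is equivalent). Your closing remark about preserving the inductive hypothesis via \eqref{eq:pr-e-matched} matches the paper's structure as well.
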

\begin{proof}
We set $c \approx 0.337$ to be the solution of $1-2c+ \frac{c^2}{2} \cdot \left( \frac{1-2c}{1-c} \right)^2 = c$. Then by \eqref{eq:prob-unmatched} and Lemma~\ref{lem:anti_correlation} $\prob{u,v \mbox{ are unmatched at }(uv)}\ge c$ and $\alpha_{(uv)} = \frac{c}{\prob{u,v \mbox{ are unmatched at }(uv)}} \le 1$, as required.
\end{proof}

\subsection{Proof of Lemma~\ref{lem:anti_correlation}}
\label{subsec:anti-correlation}
Fix an edge arrival order $\sigma$. We prove Lemma~\ref{lem:anti_correlation}  by induction on the number of edges (as in the warm-up in Section \ref{sec:edge-warmup}). 
The base case, where the number of edges is $|E|=1$, holds trivially. By the  induction hypothesis, we can assume that $\alpha_{e}\le 1$ for every edge $e\in E$ but the last edge $(uv)$ in $\sigma$. To simplify notations, we slightly abuse the definition of $E$ by excluding edge $(uv)$ from $E$. We need to show that $\alpha_{(uv)}\le 1$. 
By the induction hypothesis, Algorithm~\ref{alg:edge-integral} matches each edge $e\in E$ with probability exactly $c \cdot x_e$.
For the purpose of analysis, we think of the following random procedure that unifies the random realization in $R$ and the random decisions made by our algorithm.
\begin{enumerate}
	\item For each $e \in E$, $e\in R_e$ with probability $\marg[e]$, and conditioned on the event $e \in R_e$, $e$ is active with probability $\alpha_e$.
		\item Greedily pick active edges according to the arrival order $\sigma$. I.e., pick an active edge $(uv)$ if both $u$ and $v$ are unmatched at $(uv)$. 
\end{enumerate}
%
In the above procedure, each edge $e$ is \emph{active} with probability $\alpha_e \cdot \marg$ (independently across edges). 
%
Then, it is matched if both its ends are unmatched at the time the edge arrives.
In the remainder of this section we give a lower bound on the probability that both $u,v$ are unmatched at $(uv)$. 

Let $\realedges\subseteq E$ be the set of the active edges. Suppose there is a vertex $\witu$ such that $(u\witu)\in\realedges$ is the only active edge of $\witu$. Then $\witu$ must remain unmatched before $(u\witu)$. 
When $(u\witu)$ arrives, $u$ is either matched before, or it will be matched now. 
We call such $\witu$ a \emph{witness} of $u$, as existence of $\witu$ implies that $u$ is matched.
Moreover, if both $u$ and $v$ admit witnesses $\witu,\witv$, then $u, v$ must be matched at $(uv)$. Note that by definition $\witu\neq\witv$.

Let us give a lower bound on the probability that each of $u,v$ have a witness. 
We first describe a sampler $\pi_u$ that given the set of active edges $\realincv[u]\subseteq E_u\eqdef\{e\in E | e \text{ incident to }u\}$ incident to $u$, proposes a candidate witness of $u$. 
Let $\pi_u: 2^E \to V\cup \{\text{null}\}$ be the following random mapping.
\begin{enumerate}
	\item Resample each $e\in\mincv[u]\eqdef E\setminus E_u$ independently with probability $\alpha_e \cdot \marg$. Let the active edges be $\realhmincv[u]\subseteq\mincv[u]$.
	\item Run greedy on the instance $G=(V, \realincv[u]\cup \realhmincv[u])$ according to the arrival order $\sigma$.
	\item If $u$ is matched with a vertex $s$, return $\pi_u(\realincv[u])=s$; else, return null.
\end{enumerate}
The sampling procedure corresponds to the actual run of our algorithm, since $\realincv[u]\cup \realhmincv[u]$ has the same distribution as $\realedges$. Thus, the probability that $\witu$ is returned as the candidate witness of $u$ equals the probability that $(u\witu)$ is matched by our algorithm, which equals $c \cdot x_{u\witu}$. Hereafter, we denote the event that a vertex $\witu$ is chosen by the sampler $\pi_u$ as the candidate witness of a vertex $u$ by ``$\witu$ candidate of $u$". Thus

\begin{equation}
	\Prl[\realincu, \pi_u]
	{ \witu \mbox{ candidate of } u} = c \cdot \marg[u\witu]. \label{eq:prob_candidate}
\end{equation}
We also define a similar sampler $\pi_v$ to generate the candidate witness of $v$. Then,
\begin{multline}
\label{eq:witness}
\prob{u,v \mbox{ have witnesses}} \\
\ge  \sum_{\substack{\witu\ne \witv\\ \witu, \witv\notin \{u,v\} }} 
\Prl[\realedges,\pi_u,\pi_v]{\witu \text{ candidate of } u, \witv \text{ candidate of } v, |\realedges\cap E_{\witu}|=|\realedges\cap E_{\witv}|=1}\\
=  \sum_{\substack{\witu\ne \witv\\ \witu, \witv\notin \{u,v\} }}  \Prl[\realincu,\pi_u]{\witu \mbox{ candidate of } u}\cdot \Prl[\realincu,\pi_u]{(u\witv)\notin\realedges ~\Big\vert~ \witu \mbox{ candidate of } u} \\ 
\phantom{\sum_{\witu \ne \witv}} \times 
\Prl[\realincv,\pi_v]{\witv \mbox{ candidate of } v}\cdot \Prl[\realincv,\pi_v]{(v\witu)\notin\realedges ~\Big\vert~ \witv \mbox{ candidate of } v}\\
\phantom{\sum_{\witu \ne \witv}} \times 
\Prl[\realedges(G-\{u,v\})]{(s\witu) \text{ and }(s\witv) \text{ not active }\forall s \in V\setminus\{u,v\}} 
\end{multline}

\begin{lemma} 
\label{lem:conditioning}
For all $u,i,j\in V$ such that $i \ne j$,
\[	
\Prl[\realincu,\pi_u]{(ui) \mbox{ is not active }\big\vert~j \mbox{ candidate of } u} \ge \Prl[\realincu]{(ui) \mbox{ is not active}}	
\]
\end{lemma}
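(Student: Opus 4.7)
The plan is to prove Lemma~\ref{lem:conditioning} by showing that the events ``$j$ candidate of $u$'' and ``$(ui)$ is active'' are negatively correlated, via a coupling-and-monotonicity argument.

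I first unify the randomness: for every edge $e \in E$, let $X_e \in \{0,1\}$ be the independent Bernoulli indicator that $e$ is active, with $\Prl{X_e = 1} = \alpha_e \marg$. Then $\realincu = \{e \in \incv[u] : X_e = 1\}$ and $\realhmincv[u] = \{e \in \mincv[u] : X_e = 1\}$, so the randomness of both $\realincu$ and the internal resampling of $\pi_u$ is encoded in the single family $(X_e)_{e \in E}$. The event ``$j$ candidate of $u$'' becomes a deterministic function of $(X_e)_{e \in E}$, while the event ``$(ui)$ is active'' is exactly $\{X_{(ui)} = 1\}$.

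The crux is the following monotonicity claim: with all $X_e$ for $e \neq (ui)$ held fixed, the indicator of ``$j$ candidate of $u$'' is non-increasing in $X_{(ui)}$. To prove it, I would compare the greedy executions under $X_{(ui)} = 0$ and $X_{(ui)} = 1$. If $(uj)$ precedes $(ui)$ in $\sigma$, the two runs are identical up through the arrival of $(uj)$, so the candidate status of $j$ agrees in both. If $(ui)$ precedes $(uj)$ in $\sigma$ and $j$ is the candidate in the $X_{(ui)} = 1$ run, then $u$ is unmatched at $(uj)$'s arrival and hence also at $(ui)$'s arrival; this forces $i$ to be already matched at $(ui)$, for otherwise greedy would select $(ui)$ and match $u$ to $i$, contradicting $u$ being unmatched at $(uj)$. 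But $i$'s matching status at $(ui)$ is determined by greedy decisions on edges strictly before $(ui)$, which are unaffected by $X_{(ui)}$; so in the $X_{(ui)} = 0$ run we still have $u$ unmatched and $i$ matched at $(ui)$'s arrival, both runs skip $(ui)$ (for different reasons), and the two greedy states therefore coincide from $(ui)$ onward. Consequently $j$ remains the candidate in the $X_{(ui)} = 0$ run.

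Since $X_{(ui)}$ is independent of $(X_e)_{e \neq (ui)}$, the monotonicity above yields via a standard conditional covariance computation $\Prl{(ui) \text{ active},\ j \text{ candidate of } u} \le \Prl{(ui) \text{ active}} \cdot \Prl{j \text{ candidate of } u}$, which rearranges to the inequality of the lemma. The main obstacle is the case $(ui) \prec (uj)$ in the monotonicity argument: one has to verify that the two greedy trajectories actually agree past $(ui)$, which crucially uses the observation that the hypothesis ``$j$ candidate of $u$'' already pins down $i$'s matching status at $(ui)$, decoupling the future behavior of the two runs.
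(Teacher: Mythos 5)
Your proof is correct and follows essentially the same route as the paper: after the Bayes flip, fix all randomness except the activation of $(ui)$ and argue that the indicator of ``$j$ candidate of $u$'' can only decrease when $(ui)$ is switched from inactive to active. The paper asserts this monotonicity in a single sentence, whereas you supply the full case analysis (in particular the observation that when $(ui)$ precedes $(uj)$, the hypothesis forces $i$ to be already matched at $(ui)$, so both runs coincide from that point on), which is a welcome elaboration rather than a departure.
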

\begin{proof}
	As $1-\prob{(ui) \mbox{ is active }\big\vert~j \mbox{ candidate of } u}=\prob{(ui) \mbox{ is not active }\big\vert~j \mbox{ candidate of } u}$ and $1-\prob{(ui) \mbox{ is active}}=\prob{(ui) \mbox{ is not active}},$ we just need to show 
	$$\prob{(ui) \mbox{ is active}} \ge \prob{(ui) \mbox{ is active }\big\vert~j \mbox{ candidate of } u}, $$ which is  equivalent to 
	\[
	\Prl[\realincu,\pi_u]{j \mbox{ candidate of } u~\big\vert (ui) \mbox{ is active}} \le \Prl[\realincu,\pi_u]{j \mbox{ candidate of } u}. 
	\]
	%
	Two types of randomness are involved in this statement, the realization of edges $\realincu$ that are incident to $u$ and the resampling of remaining edges $\realhmincv[u]$ in $\pi_u(\realincu)$. Fix the realization of $\realincu \setminus (ui)$ and $\realhmincv[u]$. If $j$ is chosen as the candidate by $\pi_u$ when $(ui)$ is active, it must also be chosen when $(ui)$ is not active. This finishes the proof of the lemma.
\end{proof}

By Lemma~\ref{lem:conditioning}, 
$
\Prl[\realincu,\pi_u]{(u\witv)\text{ is not active } \big\vert \witu \mbox{ candidate of } u}\ge\Prl[\realincu]{(u\witv) \mbox{ is not active}}$\\ 
and 
$\Prl[\realincv,\pi_v]{(v\witu)\text{ is not active } \big\vert \witv \mbox{ candidate of } v}\ge\Prl[\realincv]{(v\witu) \mbox{ is not active}}$ in \eqref{eq:witness}. Furthermore, 
\begin{multline*}
\Prl[\realincu]{(u\witv)\notin\realedges}\times\Prl[\realincv]{(v\witu)\notin\realedges}
\times
\Prl[\realedges(G-\{u,v\})]{\forall s\ne u,v\quad(s\witu),(s\witv) \notin\realedges}
\\
= \prod_{\substack{e=(s\witv)\\ s\ne v}}\Prl[\realedges]{e\notin\realedges}
\prod_{\substack{e=(s\witu)\\ s\ne u}}\Prl[\realedges]{e\notin\realedges}
\ge 
\Prl[\widetilde{E}_{\witu}]{(\witu s) \mbox{ not active }\forall s} \cdot \Prl[\widetilde{E}_{\witv}]{(\witv s) \mbox{ not active }\forall s}.
\end{multline*}
We also know that $\prob{\witu \mbox{ candidate of } u}= c\cdot\marg[u\witu]$ and $\prob{\witv \mbox{ candidate of } v}=c\cdot\marg[v\witv]$. So we can continue the lower bound \eqref{eq:witness} on $\prob{u,v \mbox{ have witnesses}}$ as follows
\begin{align*}
\eqref{eq:witness}\ge & \sum_{\substack{\witu\ne \witv\\ \witu, \witv\notin \{u,v\} }}  c^2 \cdot \marg[u\witu] \marg[v\witv] 
\Prl[\widetilde{E}_{\witu}]{(\witu s) \mbox{ not active }\forall s} \cdot \Prl[\widetilde{E}_{\witv}]{(\witv s) \mbox{ not active }\forall s} 
\end{align*}

\begin{lemma} 	\label{lem:not-realized}
For any vertex $r$,\quad
$
\Prl[\widetilde{E}_{r}]{(rs) \mbox{ is not active }\forall s} \ge  \frac{1 - 2c}{1-c}.
$
\end{lemma}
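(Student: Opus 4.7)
The plan is to fix a vertex $r$ and enumerate the edges of $E$ incident to $r$ in the arrival order $\sigma$ as $e_1, \ldots, e_k$ with $e_i = (rs_i)$. Set $y_i \eqdef x_{e_i}$, $S_i \eqdef \sum_{j \le i} y_j \le 1$ (the matching-polytope constraint at $r$), and $\pi_i \eqdef \alpha_{e_i}\, y_i$, the probability that $e_i$ is active. Since edges are independently active, $\Prl[\widetilde{E}_r]{(rs) \mbox{ not active } \forall s} = \prod_{i=1}^k (1-\pi_i)$, so the task reduces to lower-bounding this product by $(1-2c)/(1-c)$.

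The key step is a sharp upper bound on each $\pi_i = c y_i / \prob{r, s_i \mbox{ unmatched at } e_i}$. By the induction hypothesis, every edge $e$ preceding $(uv)$ is matched with probability \emph{exactly} $c x_e$, and the matching events for edges sharing an endpoint are pairwise disjoint. Hence $\prob{r \mbox{ matched at } e_i} = c S_{i-1}$, and the polytope constraint at $s_i$ gives $\prob{s_i \mbox{ matched at } e_i} \le c\sum_{r' \ne r} x_{r's_i} \le c(1-y_i)$. A union bound then yields $\prob{r, s_i \mbox{ unmatched at } e_i} \ge (1-c) - c S_{i-1} + c y_i$. Introducing $u_i \eqdef (1-c) - c S_i$ (so $u_0 = 1-c$, $u_k \ge 1-2c$, and $u_{i-1} - u_i = c y_i$), this bound rewrites as $\ge 2u_{i-1} - u_i$, giving
\[
1 - \pi_i \;\ge\; 1 - \frac{c y_i}{2u_{i-1}-u_i} \;=\; \frac{u_{i-1}}{2u_{i-1}-u_i}.
\]

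The final step is a telescoping argument. Setting $\rho_i \eqdef u_i/u_{i-1} \in (0,1]$, each factor becomes $1/(2-\rho_i)$, and $\prod_i \rho_i = u_k/u_0 \ge (1-2c)/(1-c)$. The desired bound $\prod_i (1-\pi_i) \ge u_k/u_0$ therefore reduces to $\prod_i \rho_i(2-\rho_i) \le 1$, which holds term-by-term via the elementary identity $\rho_i(2-\rho_i) = 1 - (1-\rho_i)^2 \le 1$. The main obstacle is guessing the right substitution that exposes the telescope: once $u_i$ is defined so that $u_{i-1} - u_i = c y_i$, parametrizing by the ratio $\rho_i = u_i/u_{i-1}$ causes the algebra to collapse onto a trivial per-factor inequality. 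As a sanity check, the continuous-limit heuristic (treating the $y_i$ as infinitesimal, so $\sum \log(1-\pi_i) \approx -\int c\,dS/u(S) = \log(u_k/u_0)$) confirms that the constant $(1-2c)/(1-c)$ is essentially tight.
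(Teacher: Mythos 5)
Your proof is correct and follows essentially the same strategy as the paper's: enumerate the edges at $r$ in arrival order, bound each activation probability $\pi_i$ via the induction hypothesis and a union bound on the endpoints, and telescope the resulting product to $(1-2c)/(1-c)$. The only deviation is that you use the slightly sharper estimate $\Pr[s_i \mbox{ matched at } e_i] \le c(1-y_i)$ rather than the paper's coarser $\le c$; this gives a per-factor bound $u_{i-1}/(2u_{i-1}-u_i)$ instead of the paper's $u_i/u_{i-1}$, so you need the extra step $\rho_i(2-\rho_i)\le 1$ to recover the telescope, which ultimately discards the improvement and lands on the same constant — the paper's looser per-term bound is the cleaner choice here.
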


\begin{proof}
	Without loss of generality, we assume that neighbors of $r$ are enumerated from $1$ to $k$ in such a way that among all edges incident to $r$, the edge $(ri)$ appears as the $i$-th edge in $\sigma$. Notice that each edge $(ri)$ is active independently with probability $\alpha_{ri}\marg[ri]$. Recall that $\alpha_{ri}=c\cdot\prob[\realedges]{r,i \text{ are unmatched at }(ri)}^{-1}$. As $r$ is matched to $j$ with probability $c \cdot x_{rj}$, we have by a union bound and induction hypothesis
	\begin{multline}
		\label{eq:11}
		\Prl[\realedges]{r,i \mbox{ are unmatched at } (ri)} \ge  1 - \Prl[\realedges]{i \mbox{ is matched at } (ri)} -  \Prl[\realedges]{r \mbox{ is matched at } (ri)} \\
		\ge  1 - c - \sum_{j <i} \Prl[\realedges]{(rj) \mbox{ is matched}} =  1 - c - c \sum_{j<i} \marg[rj].
	\end{multline}
	Furthermore,  each edge $(ri)$ is active independently with probability $\alpha_{ri}\marg[ri]$. Therefore, we have 
	\begin{multline*}
		\Prl[\realedges]{(ri) \mbox{ is not active }\forall  i}  =  \prod_{i=1}^k (1 - \alpha_{ri}\marg[ri]) 
		=  \prod_{i=1}^k \left( 1 - \frac{c \cdot \marg[ri]}{\Prl[\realedges]{r, i \mbox{ are unmatched at } (ri)}} \right) \\
		\ge \prod_{i=1}^k \left( 1 - \frac{c \cdot \marg[ri]}{1 - c - c \sum_{j<i} \marg[rj]} \right) 
		= \prod_{i=1}^k \frac{1 - c - c \cdot \sum_{j \le i}\marg[rj]}{1 - c - c \sum_{j<i} \marg[rj]} 
		=  \frac{1 - c - c \sum_{j\le k} \marg[rj]}{1-c} \ge  \frac{1-2c}{1-c},
	\end{multline*}
	where second equality follows by the definition of $\alpha_{ri}$, first inequality follows by Equation~\eqref{eq:11}, and the last inequality by the fact that $\sum_{j\le k} \marg[rj]\le 1$.
\end{proof}


We apply Lemma~\ref{lem:not-realized} to further simplify the lower bound of Equation \eqref{eq:witness} on $\prob{u,v \mbox{ have witnesses}}$.  
\begin{multline*}
\prob{u,v \mbox{ have witnesses}} \ge \sum_{\substack{\witu\ne \witv\\ \witu, \witv\notin \{u,v\} }}  c^2 \cdot \marg[u\witu] \marg[v\witv] \cdot \left( \frac{1-2c}{1-c} \right)^2 \\
=  c^2 \cdot \left( \frac{1-2c}{1-c} \right)^2 \cdot \left( \sum_{s\notin \{u,v\}}\marg[us] \cdot \sum_{s\notin \{u,v\}}\marg[vs] - \sum_{s\notin \{u,v\}} \marg[us] \cdot \marg[vs] \right).
\end{multline*}
We recall that $\prob{u,v \mbox{ are matched at } (uv)} \ge \prob{u,v \mbox{ have witnesses}}$. Therefore, we get the following bound from Equation~\eqref{eq:prob-unmatched}.

\begin{multline}
\label{eq:function}
 \Prl[\realedges]{u,v \mbox{ are unmatched at } (uv)} 
\ge 1 - c\InParentheses{\sum_{s\notin \{u,v\}}\marg[us] + \sum_{s\notin \{u,v\}}\marg[vs]} +  \Prl[\realedges]{u,v \mbox{ have witnesses}} \\
\ge  1 - c \InParentheses{\sum_{s}\marg[us] + \sum_{s}\marg[vs]}  + c^2 \cdot \left( \frac{1-2c}{1-c} \right)^2 \cdot  \left( \sum_{s}\marg[us] \cdot \sum_{s}\marg[vs] - \sum_{s} \marg[us] \cdot \marg[vs] \right),
\end{multline}
where all summations are taken over $s\in V\setminus\{u,v\}$. Note that since $\margv \in \polymatch$, then for all $s$ we have $\marg[us], \marg[vs]\ge 0$, $\marg[us]+\marg[vs]\le 1$, as well as $\sum_{s}\marg[us]\le 1$ and $\sum_{s}\marg[vs]\le 1$. Let us find the minimum of the function $f(\margv)\eqdef\text{RHS of}$ Equation~\eqref{eq:function}. To conclude the proof it suffices to show that $\min_{\margv  \in \polymatch} f(\margv)$ is at least the value in the statement of Lemma~\ref{lem:anti_correlation}. 

\begin{lemma}\label{lem:lastfx}
\[
f(\margv) \ge 1-2c+ \frac{c^2}{2} \cdot \left( \frac{1-2c}{1-c} \right)^2.
\]
\end{lemma}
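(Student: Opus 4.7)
The plan is to reduce the inequality to an elementary two-variable bound. I abbreviate $a \eqdef \sum_{s \notin \{u,v\}} \marg[us]$, $b \eqdef \sum_{s \notin \{u,v\}} \marg[vs]$, $M \eqdef \sum_{s \notin \{u,v\}} \marg[us]\marg[vs]$, and $C \eqdef c^2 \bigl(\tfrac{1-2c}{1-c}\bigr)^2$, so that $f(\vect{x}) = 1 - c(a+b) + C(ab - M)$ and the target inequality becomes $f(\vect{x}) \ge 1 - 2c + C/2$. Since $\vect{x} \in \polymatch$ and $s$ ranges over $V \setminus \{u,v\}$, we have $a, b \in [0,1]$, $\marg[us], \marg[vs] \ge 0$, and $\marg[us] + \marg[vs] \le 1$ for every $s$.

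The first key step is to eliminate $M$ via the pointwise inequality $\marg[us]\marg[vs] \le (\marg[us]+\marg[vs])^2/4 \le (\marg[us]+\marg[vs])/4$, where the first bound is AM--GM and the second uses $\marg[us]+\marg[vs] \le 1$. Summing over $s$ yields $M \le (a+b)/4$, hence
\[
f(\vect{x}) \;\ge\; 1 - c(a+b) + C\bigl(ab - \tfrac{a+b}{4}\bigr) \;=\; 1 - h(a,b),
\]
where $h(a,b) \eqdef (c + C/4)(a+b) - Cab$. It therefore suffices to show $h(a,b) \le 2c - C/2$ on $[0,1]^2$.

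Next I verify that $h$ is non-decreasing in each argument on the box $[0,1]^2$. The partial $\partial_a h = (c + C/4) - Cb$ is non-negative for every $b \in [0,1]$ precisely when $c \ge 3C/4$, which holds because $\tfrac{1-2c}{1-c} \le 1$ for $c \in (0, 1/2]$ forces $C \le c^2$, so $3C/4 \le 3c^2/4 < c$. By symmetry $\partial_b h \ge 0$ as well, so $h$ attains its maximum on $[0,1]^2$ at the corner $(1,1)$, where a direct computation gives $h(1,1) = 2(c + C/4) - C = 2c - C/2$. Substituting back, $f(\vect{x}) \ge 1 - (2c - C/2) = 1 - 2c + C/2$, matching the stated bound.

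I do not expect a serious obstacle here; the two ingredients are the pointwise constraint $\marg[us] + \marg[vs] \le 1$ (which is what makes the AM--GM step effective) and the loose bound $C \le c^2$ needed for monotonicity. The proof also makes transparent that the inequality is essentially tight at $a = b = 1$ with a choice of $\{\marg[us], \marg[vs]\}$ saturating the per-coordinate constraint, so any further improvement of the overall competitive ratio would have to come from strengthening Lemma~\ref{lem:anti_correlation} rather than from this reduction.
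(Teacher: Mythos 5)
Your proof is correct, and it takes a genuinely different route from the paper's. The paper treats $f$ as a function of the full vector $\margv$, checks that $\partial f/\partial \marg[ui]<0$ for every coordinate, concludes the minimum sits at a boundary point of $\polymatch$, and then splits into two cases: (a) $\sum_s\marg[us]=\sum_s\marg[vs]=1$, where a three-inequality combination yields $\sum_s\marg[us]\marg[vs]\le\tfrac12$, and (b) one of the sums is below $1$, where every pointwise constraint $\marg[us]+\marg[vs]\le 1$ must be tight, forcing $\sum_s\marg[us]+\sum_s\marg[vs]$ to be an integer less than $2$ and hence at most $1$. You instead apply the pointwise bound $\marg[us]\marg[vs]\le(\marg[us]+\marg[vs])/4$ uniformly, which collapses the whole problem to the two-variable function $h(a,b)=(c+C/4)(a+b)-Cab$ on $[0,1]^2$ and a single monotonicity check ($c\ge 3C/4$, immediate from $C\le c^2$). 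This buys you two things: you avoid the paper's somewhat delicate integrality argument in case (b) entirely, and the one-line computation $h(1,1)=2c-C/2$ replaces both cases at once. The trade-off is that your intermediate bound $M\le(a+b)/4$ is loose away from the corner (e.g.\ at $b=0$ the true $M$ is $0$), but monotonicity of $h$ absorbs the slack, and at the extremal configuration $a=b=1$ with $\marg[us]=\marg[vs]=\tfrac12$ your bound coincides with the paper's $M\le\tfrac12$, so nothing is lost in the final constant. Your closing remark about tightness is also consistent with the paper's discussion in Section~\ref{sec:future} that further improvement must come from strengthening Lemma~\ref{lem:anti_correlation}.
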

\begin{proof}
We observe that $\frac{\partial f}{\partial \marg[ui]}=-c+c^2 \cdot \left( \frac{1-2c}{1-c} \right)^2 \cdot (\sum_{s}\marg[vs]- \marg[vi])<0$ for all $i\in V\setminus\{u,v\}$. Similarly, $\frac{\partial f}{\partial \marg[vi]}< 0$ for all $i\in V\setminus\{u,v\}$. That means that the minimum of $f$ is achieved at a boundary point $\margv$, which does not allow us to increase any of the $\marg[vi]$ or $\marg[ui]$. The analysis proceed in two cases.

\paragraph{Case (a).} If $\sum_{s}\marg[us]=\sum_{s}\marg[vs]=1$, then we can find a good upper bound on $\sum_{s} \marg[us] \cdot \marg[vs]$ as follows. First, $\frac{1}{4}\sum_{s} \marg[us] \cdot \marg[vs]\le \frac{1}{4}\sum_{s} \marg[us] \cdot (1-\marg[us])=\frac{1}{4}-\frac{1}{4}\sum_{s} \marg[us]^2$. Similarly, $\frac{1}{4}\sum_{s} \marg[us] \cdot \marg[vs]\le\frac{1}{4}-\frac{1}{4}\sum_{s} \marg[vs]^2$. Second, $\frac{1}{2}\sum_{s} \marg[us] \cdot \marg[vs]\le \frac{1}{4}\sum_{s} \marg[us]^2 +\frac{1}{4}\sum_{s} \marg[vs]^2$. Now, if we add the last three inequalities together, we get $\sum_{s} \marg[us] \cdot \marg[vs]\le \frac{1}{2}$. Thus $f(\margv)\ge 1-2c+c^2 \cdot \left( \frac{1-2c}{1-c} \right)^2\cdot(1-\frac{1}{2})$, which is equal to the desired bound in Lemma~\ref{lem:anti_correlation}.

\paragraph{Case (b).} If $\sum_{s}\marg[us]<1$, or $\sum_{s}\marg[vs]<1$. Then each inequality $\marg[us]+\marg[vs]\le 1$ must be tight for every $s\in V\setminus\{u,v\}$. It means that $\sum_{s}\marg[vs]+\sum_{s}\marg[us]=\sum_{s}(\marg[us]+\marg[vs])=\sum_{s} 1\in\Z$, also 
$\sum_{s}\marg[vs]+\sum_{s}\marg[us]<2$. Therefore, $\sum_{s}\marg[vs]+\sum_{s}\marg[us]\le 1$. We get that $f(\margv)\ge 1- c(\sum_{s}\marg[us] + \sum_{s}\marg[vs])=1-c\ge 1-2c+\frac{c^2}{2} \cdot \left( \frac{1-2c}{1-c} \right)^2$.
\end{proof}

\section{Upper Bounds for Matching Prophet Inequality}
\label{sec:edge_lower}
In this section we present upper bounds on the competitive ratios for matching prophet inequality with edge arrival, with respect to the fractional and ex-ante optimal solutions.
Note that the $1/2$ competitive ratio with respect to the classical prophet inequality extends trivially to matching prophet inequality (for both vertex and edge arrival models), implying that our $1/2$ competitive ratio for vertex arrival, as implied from Section \ref{sec:vertex-arrival}, is tight.
The following propositions give upper bounds on the competitive ratio of prophet inequalities for matching with edge arrival. Proposition~\ref{prop:upper-bound-frac} gives an upper bound with respect to the optimal fractional matching, and Proposition~\ref{prop:upper-bound-ex-ante} gives an upper bound with respect to the optimal ex-ante matching (see definition below).
\begin{proposition}
	\label{prop:upper-bound-frac}
	Under the edge arrival model, no online algorithm can get better than $\frac{3}{7}$ of $\fopt$, even for 6-vertex graphs.
\end{proposition}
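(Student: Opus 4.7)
The goal is to exhibit a single adversarial instance, a $6$-vertex graph $G=(V,E)$ with independent edge-weight distributions $\dists=\prod_{e\in E}\disti[e]$ and an adversarial edge arrival order $\sigma$, such that
\[
  \sup_{\alg}\; \Ex[\weights,\alg]{\weights(\alg(\weights))} \;\le\; \tfrac{3}{7}\cdot\Ex[\weights]{\fopt(\weights)}.
\]
My plan is to first construct this instance, then compute $\Ex{\fopt}$ exactly, and finally argue that no online algorithm $\alg$ can do better than $\tfrac{3}{7}$ of this benchmark.

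To create a genuine gap between the integer and fractional benchmarks (without which $\tfrac12$ would be the correct bound via classical prophet inequality), the graph $G$ must contain an odd cycle. On $6$ vertices the natural candidates are the disjoint union of two triangles, the triangular prism, a $5$-cycle plus a pendant, or $K_4$ plus two pendants. I would pick whichever graph, together with a carefully chosen two-point weight distribution on each edge, balances the following three pressures: (i) the fractional optimum must be able to exploit the half-integral solution of the odd cycle(s) for a large fraction of the realizations; (ii) the adversarial arrival order $\sigma$ must force the algorithm to commit to ``safe'' edges before observing the realizations of the ``risky'' high-variance edges; and (iii) the resulting gap must be tuneable to precisely $\tfrac37$. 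A standard device is to give one distinguished edge a deterministic value normalized to $1$ and the remaining odd-cycle edges a Bernoulli distribution of the form $H$ with probability $p_H$ and $0$ otherwise, choosing $H$ and $p_H$ so that the online/offline ratio is minimized.

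Computing $\Ex{\fopt}$ is the routine step: since the instance has only a constant number of edges and only finitely many joint realizations of $\weights$, one evaluates the maximum-weight fractional matching $\max_{\margmatchv\in\polymatch}\sum_{e}w_e\,\margmatch[e]$ pointwise and averages; the small symmetries of the chosen graph collapse this to a closed-form expression.

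The main obstacle, and the substantive content of the proof, is the upper bound on $\Ex{\weights(\alg)}$ over \emph{all} online algorithms. I would proceed in three stages. First, by a standard convexity argument, restrict attention to deterministic threshold-type policies: at the arrival of each edge $e$ the algorithm either accepts it (if both endpoints are free and its realized weight exceeds a threshold depending on the history) or discards it. Second, exploit the adversarial (oblivious) ordering $\sigma$ to show that the algorithm's decision tree collapses to a small number of real parameters, namely the acceptance decisions on the first few edges; subsequent decisions are forced because after the ``early'' edges are fixed, the remaining subgraph is small. Third, write these parameters as variables in a finite linear or bilinear program whose optimum upper-bounds $\Ex{\weights(\alg)}$, and verify that its value is exactly $\tfrac37 \cdot \Ex{\fopt}$. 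The technical difficulty lies in stage two: one must account for the algorithm's ability to adaptively rebalance its thresholds on later edges as a function of earlier outcomes, which I expect to handle by a coupling/exchange argument that shows every adaptive strategy is dominated by a non-adaptive one on this particular instance.
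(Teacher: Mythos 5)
Your proposal is a plan for a proof, not a proof: the three things that constitute the actual content of the proposition --- the explicit instance, the value of $\E[\fopt]$, and the upper bound on every online algorithm --- are all deferred (``I would pick whichever graph\ldots'', ``one evaluates\ldots pointwise and averages'', ``verify that its value is exactly $\tfrac{3}{7}\cdot\E[\fopt]$''). You correctly identify the right shape of the construction (a $6$-vertex graph containing odd cycles, deterministic ``safe'' edges arriving before high-variance Bernoulli ``risky'' edges, with the odd cycles letting the fractional benchmark beat any integral matching), and indeed the disjoint union of two triangles, which you list as one of four candidates, is the graph the paper uses. But without committing to an instance and doing the arithmetic, nothing is proved. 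For the record, the paper's instance is: two vertex-disjoint triangles whose six edges have deterministic weight $1$, plus the nine cross edges between the triangles, each worth $\tfrac{1}{4\epsilon}$ with probability $\epsilon$; the six triangle edges arrive first. The fractional optimum takes each triangle edge with weight $\tfrac12$ (value $3$) unless a big edge is realized, giving $\E[\fopt]\approx \tfrac{9}{4}+3=\tfrac{21}{4}$.

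Your anticipated stage two and three (collapsing adaptive decision trees, a bilinear program, a coupling/exchange argument) are also solving a harder problem than the instance requires, and this matters because carrying them out in general would be substantial work. In the paper's instance \emph{all} deterministic edges arrive before any stochastic information is revealed, so adaptivity buys the algorithm nothing in the first phase: its only real choice is whether to commit to $0$, $1$, or $2$ triangle edges ($2$ meaning one per triangle, since two edges of the same triangle cannot both be in a matching). Each choice determines which of the nine big edges remain feasible ($9$, $3$, or $1$ respectively, since taking a triangle edge blocks two of the three vertices of that triangle), and in the second phase taking every realized feasible big edge is trivially optimal as $\epsilon\to 0$. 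The three cases give $\tfrac94$, $\tfrac74$, and $\tfrac94$, so the online optimum is $\tfrac94$ and the ratio is $\tfrac{9/4}{21/4}=\tfrac37$. The lesson is that the instance should be engineered so that the online upper bound is a finite case check, not the output of an optimization over adaptive policies.
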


\begin{proof}
Consider the graph depicted in Figure \ref{fig:upper-bounds}(a) with 6 vertices $a,b,c,d,e,f$, where edges $(ab)$, $(bc)$, $(ac)$, and $(de), (ef), (df)$ have a fixed weight of 1, and all other 9 edges have weight $\frac{1}{4\epsilon}$ with probability $\epsilon$ (for an arbitrarily small $\epsilon$), and 0 otherwise. We refer to the latter edges as the big edges. Suppose the 6 fixed edges arrive first, followed by the big edges.

The optimal fractional solution is the following: if there exists a big edge (this happens with probability $9\epsilon+O(\epsilon^2)$), then take it; else take each of the fixed edges with probability $1/2$. This approximately gives us $9\epsilon \frac{1}{4\epsilon}+(1-9\epsilon)3 = \frac{21}{4}$.

We next show that any online algorithm gets at most $\frac{9}{4}$, resulting in a ratio of $\frac{3}{7}$, as claimed.
An online algorithm can choose to select either 0, 1, or 2 fixed edges, without knowing the realization of the big edges.
If it chooses 0 fixed edges, it gets $\sim 9\epsilon \frac{1}{4\epsilon}=\frac{9}{4}$.
If it chooses 1 fixed edge, it gets $\sim 1+ 3\epsilon \frac{1}{4\epsilon}=\frac{7}{4}$.
If it chooses 2 fixed edges (one from each triangle), it gets $\sim 2+ \epsilon \frac{1}{4\epsilon}=\frac{9}{4}$.
This completes the proof.
\end{proof}

\begin{figure}[H]
	\includegraphics[width=1\textwidth]{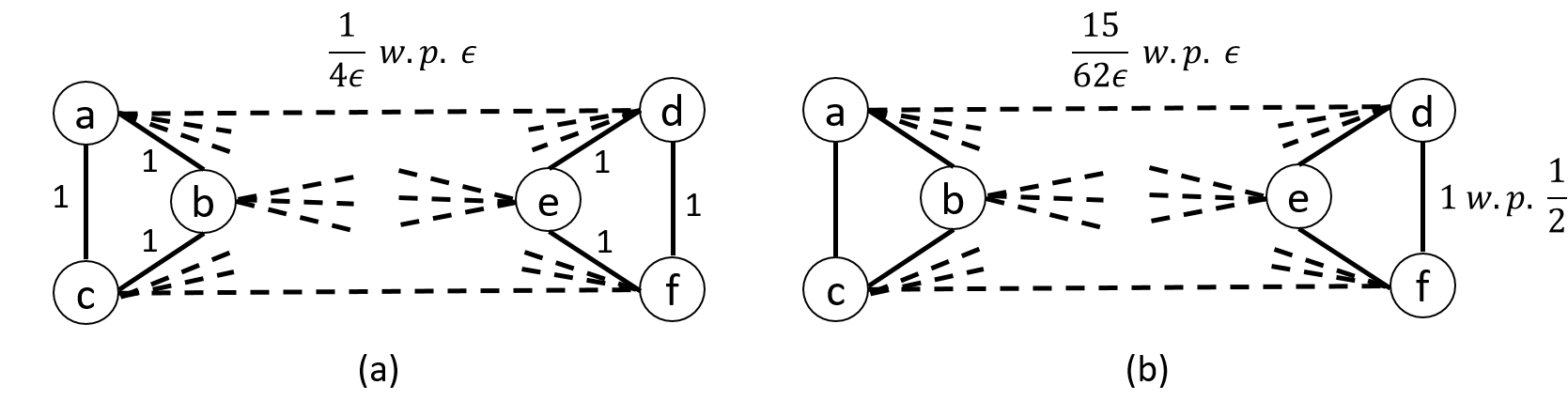}
	\caption{Upper bounds for matching prophet inequality with edge arrival. (a) upper bound with respect to optimal fractional matching. Solid lines have weight 1; dotted lines have weight $1/4\epsilon$ w.p. $\epsilon$. (b) upper bound with respect to optimal ex-ante matching. Solid lines have weight 1 w.p. $1/2$; dotted lines have weight $15/62\epsilon$ w.p. $\epsilon$.
	\label{fig:upper-bounds}}
\end{figure}

A stronger benchmark than the optimal fractional matching is the optimal ex-ante matching $\vy\in[0,1]^E$, defined as follows: 
\begin{align*}
&\vy= \argmax \sum_e \Ex[w_e]{w_e ~\vert~ w_e \geq F_e^{-1}(1-y_e)}  \cdot y_e \quad \mbox{ subject to } \vy \in \polymatch\\
&\exopt(\dists)= \sum_e \Ex[w_e]{w_e ~\vert~ w_e \geq F_e^{-1}(1-y_e)}.
\end{align*}

The following proposition gives an upper bound with respect to the optimal ex-ante matching.
Note that our lower bounds for edge arrival apply also with respect to the optimal ex-ante solution (see Section~\ref{sec:ex-ante}).

\begin{proposition}
	\label{prop:upper-bound-ex-ante}
Under the edge arrival model, no online algorithm can get better than $\frac{135}{321}$ of $\exopt$, even for 6-vertex graphs.
\end{proposition}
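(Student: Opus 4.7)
The plan is to use the 6-vertex two-triangle graph from Figure~\ref{fig:upper-bounds}(b), with 6 triangle edges (each of weight $1$ w.p. $\tfrac{1}{2}$) and 9 cross edges (each of weight $\tfrac{15}{62\epsilon}$ w.p. $\epsilon$), for arbitrarily small $\epsilon>0$, under the arrival order: all triangle edges first, then all cross edges. The argument then proceeds in three steps.

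First, I would compute $\exopt$. Since $\E[w_e\mid w_e\ge F_e^{-1}(1-y_e)]\cdot y_e = v_e\,y_e$ whenever $y_e\le p_e$ for a Bernoulli edge of value $v_e$ with probability $p_e$, an optimal $\vy\in\polymatch$ sets $y_e\le p_e$. Using all the capacity by choosing $y_e=\epsilon$ on every cross edge (consuming $3\epsilon$ per vertex) and $y_e=(1-3\epsilon)/2$ on every triangle edge gives $\exopt = 3(1-3\epsilon)+\tfrac{135}{62}$, which tends to $\tfrac{321}{62}$ as $\epsilon\to 0$.

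Second, I would bound the value of any online $\alg$ by conditioning on the number $i\in\{0,1,2\}$ of triangle edges it includes (at most one per triangle). Let $q_i$ denote these probabilities. A key observation is that the expected cross-edge value conditional on case $i$ is at most (the number of available cross edges)$\cdot\tfrac{15}{62}$ by linearity of expectation, and this number is $9, 3, 1$ in cases $i=0,1,2$ respectively (in case $i=1$, only the 3 cross edges incident to the unique free vertex of the matched triangle remain; in case $i=2$, only the single cross edge between the two free vertices remains). Adding the deterministic triangle contribution of $i$, and writing everything in units of $\tfrac{1}{62}$, I would obtain the bound $\Exlong[\weights,I]{\weights(I)} \le 135\,q_0+107\,q_1+139\,q_2$.

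Third, I would parametrize the algorithm's adaptive triangle strategy by the probabilities $p_A\eqdef\Pr[\text{match in }A]$, $p_B^{(0)}\eqdef\Pr[\text{match in }B\mid\text{no match in }A]$ and $p_B^{(1)}\eqdef\Pr[\text{match in }B\mid\text{match in }A]$, each lying in $[0,\tfrac{7}{8}]$ since $1-(\tfrac{1}{2})^3=\tfrac{7}{8}$ is the maximum probability of matching in a single triangle (only weight-$1$ edges can be profitably matched). A straightforward expansion yields
\[
135\,q_0+107\,q_1+139\,q_2 \;=\; 135 \,-\, 28\,p_A \,-\, 28(1-p_A)\,p_B^{(0)} \,+\, 32\, p_A\, p_B^{(1)},
\]
which, by the bounds $p_B^{(0)}\ge 0$ and $p_B^{(1)}\le \tfrac{7}{8}$, is at most $135 - 28\,p_A + 32\,p_A\cdot\tfrac{7}{8} = 135$. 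Hence $\Exlong[\weights,I]{\weights(I)}\le \tfrac{135}{62}$ and the ratio $\alg/\exopt$ does not exceed $\tfrac{135/62}{3-9\epsilon+135/62}\to \tfrac{135}{321}$ as $\epsilon\to 0$.

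The main obstacle is the per-case analysis in the second step: one must carefully argue, for each $i\in\{0,1,2\}$, that the matching constraints leave exactly $9$, $3$, and $1$ cross edges available, and that $\sum_{e\in \text{avail}} p_e v_e$ is a valid upper bound on the (online) expected cross-edge value. Once the coefficients $(135,107,139)$ are in place, the cancellation $32\cdot\tfrac{7}{8}=28$ in the third step is clean and produces a bound on $\alg$ that is independent of the algorithm's triangle strategy, matching the claimed ratio.
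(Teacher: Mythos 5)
Your proposal is correct, and it uses the same construction (the 6-vertex two-triangle graph with Bernoulli triangle edges and rare big cross edges, triangle edges first) and the same $\exopt\to\frac{321}{62}$ computation as the paper. The difference is in how you bound the online algorithm. The paper branches only on the decision at the first triangle (take $0$ or $1$ of its edges), and then for each branch takes a $\max$ over the two remaining strategies for the second triangle, evaluating each branch directly to $\frac{135}{62}+O(\epsilon)$. You instead condition on the total number $i\in\{0,1,2\}$ of triangle edges matched, bound the conditional cross-edge revenue by $(\text{\# available cross edges})\cdot\frac{15}{62}$ (giving coefficients $135,107,139$ in units of $\frac{1}{62}$), and then parametrize the adaptive policy via $p_A, p_B^{(0)}, p_B^{(1)}\le\frac{7}{8}$ and observe the bilinear expression
\[
135 - 28\,p_A - 28(1-p_A)p_B^{(0)} + 32\,p_A p_B^{(1)}
\]
never exceeds $135$ thanks to the cancellation $32\cdot\frac{7}{8}=28$. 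This is a somewhat more systematic, algebraic route: the paper's branch-by-branch evaluation is shorter and more elementary (it needs no parametrization), while your parametric formulation makes explicit that the bound is uniform over all randomized adaptive triangle strategies and might be easier to adapt if the example were perturbed. One small caveat in your write-up: when you invoke linearity of expectation to bound the cross-edge revenue, you should also note that the cross-edge realizations are independent of the triangle-phase decisions (since the triangle edges all arrive first), so conditioning on $i$ does not change the cross-edge distribution; this is implicit in your argument and in the paper's, but it is the point that makes the per-case bound legitimate.
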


\begin{proof}
	Consider the graph depicted in Figure \ref{fig:upper-bounds}(b) with 6 vertices $a,b,c,d,e,f$, where edges $(ab)$, $(bc)$, $(ac)$, and $(de)$, $(ef)$, $(df)$ have a weight of 1 with probability $\frac{1}{2}$ and 0 otherwise. All other 9 edges have weight $\frac{15}{62\epsilon}$ with probability $\epsilon$ (for an arbitrarily small $\epsilon$), and 0 otherwise. We refer to the latter edges as the big edges. Suppose the edges $(ab)$, $(bc)$, $(ac)$ arrive first, followed by the edges $(de)$, $(ef)$, $(df)$, and only then the big edges arrive.
	
	The optimal \textit{ex-ante} solution is the following: it takes the big edges with probability $\epsilon$ and the other edges with probability of approximately $\frac{1}{2}$. This gives approximately a value of $9\epsilon \frac{15}{62\epsilon}+3 = \frac{321}{62}$.
	
	We next show that any online algorithm gets at most $\frac{135}{62}$, resulting in a ratio of $\frac{135}{321}$, as claimed.
	An online algorithm can choose to select either 0 or 1 edges from the set $\{(ab)$,$(bc)$,$(ca)\}$ without knowing the realization of the big edges.
	If it chooses none of the edges $(ab)$,$(bc)$,$(ca)$, it gets $\frac{1}{8}\cdot 9\epsilon\cdot \frac{15}{62\epsilon} +\frac{7}{8}\max(9\epsilon\cdot  \frac{15}{62\epsilon},1+3\epsilon\cdot\frac{15}{62\epsilon})+O(\epsilon)=\frac{135}{62}+O(\epsilon)$.
	If it chooses one edge from $\{(ab),(bc),(ca)\}$, it gets $\sim 1+ \frac{1}{8} \cdot 3\epsilon\cdot \frac{15}{62\epsilon} +\frac{7}{8}\max(3\epsilon \frac{15}{62\epsilon},1+\epsilon\frac{15}{62\epsilon} )+O(\epsilon)=\frac{135}{62}+O(\epsilon)$.
	This completes the proof.
\end{proof}

%

\section{Discussion}
\label{sec:future}
In this paper we introduce the framework of batched prophet inequalities and apply it to stochastic online matching problems. Our results demonstrate the merit of online contention resolution schemes as a useful tool for generating prophet inequalities with good performance.
The new framework introduced here suggests many fascinating avenues for future work. Some of them are listed below.
\begin{enumerate}
	\item It would be interesting to study whether our algorithms apply to online matching problems with other arrival models. For example, upon the arrival of a vertex, all edges from the new vertex to all \emph{future} vertices are revealed.
	\item We achieve an optimal $\frac{1}{2}$-competitive (resp., $0.337$-competitive) algorithm for vertex  (resp., edge) arrival via OCRS. 
	Are there pricing-based algorithms with comparable performance? 
	\item For both vertex and edge arrival settings, consider the random arrival order, a.k.a. prophet secretary. For the one-sided vertex arrivals, \citet{EhsaniHKS18} showed that the competitive ratio can be improved to $1-\frac{1}{e}$. 
	Does it generalize to two-sided vertex arrival model?
	\item 
In the edge arrival setting, it seems unlikely that an OCRS can be better than $0.382$-selectable due to the discussion in Section~\ref{sec:edge-improved}. 
	On the other hand, the best upper bound known for the prophet inequality setting is $\frac{3}{7}$ (against optimal fractional matching) and $0.420$ (against ex-ante relaxation). The gap is fairly large and it is unclear if OCRS approach can yield tight competitive ratio.
	\item This paper focuses on matching feasibility constraints. Consider studying other natural batched prophet inequality settings, with natural structures of ordered partitions into batches. 
\end{enumerate}


\bibliographystyle{plainnat}
\bibliography{prophet-matching}

\newpage
\appendix
\section{Extension to Fractional OCRS and Prophet Inequality} 
\label{sec:fractional-ocrs}
In this section we extend the definition of batched OCRS to fractional-batched OCRS, and extend the reduction in 
Theorem~\ref{thm:reduction_integral} between batched prophet inequality against the fractional optimum and fractional-batched OCRS. 
%
%
%

\subsection{Fractional Batched OCRS}
\label{app:fractional-ocrs}
Consider a {\em fractional sampling scheme} $R$ that selects a random fractional sets, i.e., vectors $\vr\in[0,1]^E$, where $\vr=(\vrt)_{t\in[T]}$ is composed of mutually independent samples $\vrt\in[0,1]^{B_t}$ (recall $E=\bigsqcup_t B_t$).
\begin{definition}[$c$-fractional-selectable batched OCRS]
An online selection algorithm  $\alg$ with respect to a fractional sampling scheme $R$ is a batched OCRS if it selects a set $I_t \subseteq B_t$ at every time $t$ such that $I \eqdef \bigsqcup_{t\in [T]} I_t\in \feasible$ is feasible.
It is a $c$-batched-OCRS if:
%
%
%
	\begin{equation}
	\Prlong[\alg]{ e \in I_t \mid \vrt=\vs} \ge c \cdot  \vs_e  \quad \text{ for all } t \in [T], \vs \in [0,1]^{B_t},  \mbox{ and } e\in B_t.\label{eq:cocrs}
	\end{equation}
	\label{def:batched_ocrs_frac}
\end{definition}

The algorithm $\alg$ is oblivious to the partition into batches and to the arrival order of the batches. It only knows the general structure $\bfamily$.
Thus, at time $t$, $\alg$ chooses $I_t$  based on $B_1, \ldots, B_t$, and $\vrt[1], \ldots, \vrt$.

\subsection{Batched Fractional Prophet Inequality} 
\label{app:fractional-pi}
For certain feasibility constraints $\feasible$, it makes sense to consider fractional optimum $\fopt(\weights)\in[0,1]^E$, where 
$\fopt\in\polymatch$ for a fractional relaxation of feasibility family $\feasible$ and $\fopt=\argmax_{\margv\in\polymatch}\iprod{\weights}{\margv}$. The weight of $\fopt$ is 
 $\weights(\fopt(\weights))\eqdef\iprod{\weights}{\fopt(\weights)}\ge\weights(\opt(\weights))$.
\begin{definition} [$c$-batched-fractional-prophet inequality]
A batched-prophet inequality algorithm $\alg$ is an online selection process that selects at time $t$ a set  $I_t \subseteq B_t$ such that $I \eqdef \bigsqcup_{t\in [T]} I_t$ is feasible (i.e., $I \in \feasible$).
We say that $\alg$ has competitive ratio $c$ against fractional optimum if
\begin{equation}
\label{eq:def-prophet-frac}
\Exlong[\weights,I]{\weights(I)} \geq c \cdot \Exlong[\weights]{\weights(\fopt(\weights))}.
\end{equation}
\end{definition}

\subsection{Reduction: Prophet Inequality to OCRS} 
\label{app:reduction-fractional}
We define the fractional random sampling scheme $R(\weights,\dists)$ for $\weights\sim\dists$ as follows.
Let $\vrt(\vwt,\dists)\eqdef \restr{\fopt(\vwt,\wweights^{(t)})}{B_t}$, where $\wweights^{(t)} \sim \dists_{-t}$ is independently generated of $\weights$, i.e., 
$\vrt(\vwt,\dists)_e = \fopt(\vwt,\wweights^{(t)})_e$ for all $e \in B_t$.
Let $\vr(\weights,\dists)\eqdef(\vrt(\vwt,\dists))_{t\in[T]}$.
%
We notice that

\begin{enumerate}
	\item The distribution of $\vr(\weights,\dists)\sim R(\weights,\dists)$ is a product distribution over the random variables $\vrt(\vwt,\dists)$.
	\item Since $\dists$ is a product distribution,  $(\vwt,\wweights^{(t)})\sim\dists$.
	\item $\forall t \in [T]$, $\vrt(\vwt,\dists)$ has the same distribution as $\restr{\fopt(\weights)}{B_t}$ (i.e., restriction to $e\in B_t$), where $\weights \sim \dists$.
	\item For every $t\in [T]$,
	\begin{equation}
		\Exlong[\weights, R]{\iprod{\vwt}{\vrt(\vwt,\dists)}} = \Exlong[\weights]{\iprod{\vwt}{\restr{\fopt(\weights)}{B_t}}}. 		\label{eq:expected_marginal_frac}
	\end{equation}
\end{enumerate}


\begin{theorem} [reduction from prophet inequality to OCRS (fractional)]
For every set $\bfamily$ of feasible ordered partitions, given a $c$-batched OCRS for the fractional sampling scheme $R(\weights,\dists)$ with $\weights\sim\dists$, one can construct a batched prophet inequality $c$-competitive algorithm for $\weights\sim\dists$ against the fractional optimum.
\end{theorem}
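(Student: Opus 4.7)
The plan is to imitate the proof of Theorem~\ref{thm:reduction_integral} with just two modifications: the resampled object is now a fractional vector $\vrt\in[0,1]^{B_t}$ rather than a set, and the selectability inequality is applied in its fractional form $\Prlong[\alg]{e\in I_t\mid\vrt=\vs}\ge c\cdot \vs_e$ rather than the $0/1$ form. Concretely, I propose the obvious fractional analogue of Algorithm~\ref{alg:integral}: at time $t$, observe $B_t$ and $\vwt$, independently resample $\wweights^{(t)}\sim\dists_{-t}$, set $\vrt\eqdef\restr{\fopt(\vwt,\wweights^{(t)})}{B_t}$, invoke the given $c$-batched OCRS on $(B_1,\ldots,B_t,\vrt[1],\ldots,\vrt)$ to obtain $I_t\subseteq B_t$, and return $I\eqdef\bigsqcup_t I_t$. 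Feasibility $I\in\feasible$ is inherited from the OCRS.

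For the competitive ratio analysis, I would expand
\[
\Exlong[\weights,I]{\weights(I)}=\sum_{t\in[T]}\sum_{e\in B_t}\Ex{w_e\cdot\I[e\in I_t]}
\]
by linearity, then condition on $\vrt$ and exploit that $I_t$ is independent of $\vwt$ given $\vrt$: the OCRS only reads the realised fractional vectors and its own internal randomness, and the other $\vrt[s]$ for $s\ne t$ are independent of $\vwt$ by construction. Combined with the fractional OCRS guarantee, this yields
\[
\Ex{w_e\cdot\I[e\in I_t]}=\Ex{w_e\cdot\Prlong[\alg]{e\in I_t\mid\vrt}}\;\ge\;c\cdot\Ex{w_e\cdot\vrt_e}.
\]

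Summing over $e\in B_t$ gives $\Ex{\iprod{\vwt}{\vect{1}_{I_t}}}\ge c\cdot\Ex{\iprod{\vwt}{\vrt}}$, and identity~\eqref{eq:expected_marginal_frac} rewrites the right-hand side as $c\cdot\Exlong[\weights]{\iprod{\vwt}{\restr{\fopt(\weights)}{B_t}}}$. Summing over $t\in[T]$ and collapsing the inner product using the partition $E=\bigsqcup_t B_t$ gives
\[
\Exlong[\weights,I]{\weights(I)}\ge c\cdot\Exlong[\weights]{\iprod{\weights}{\fopt(\weights)}}=c\cdot\Exlong[\weights]{\weights(\fopt(\weights))},
\]
which is precisely~\eqref{eq:def-prophet-frac}.

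The main delicate point---and the only place that genuinely differs from the integer reduction---is the conditional-independence step above. In Theorem~\ref{thm:reduction_integral} the conditioning event $R_t=S$ was set-valued and the OCRS bound was a flat constant $c$, so the factorisation was essentially immediate. In the fractional setting the OCRS lower bound is coordinate-weighted by $\vrt_e$, and $\vrt$ is itself a deterministic function of $(\vwt,\wweights^{(t)})$, so I must justify carefully that $\Ex{\I[e\in I_t]\mid \vrt,\vwt}=\Ex{\I[e\in I_t]\mid \vrt}$. This follows because $\vwt$ is never passed to the OCRS and the other batches' resamplings $\wweights^{(s)}$ for $s\ne t$ are drawn independently of $\vwt$; once this observation is nailed down, every remaining step is a mechanical translation of the integer case.
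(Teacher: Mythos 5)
Your proposal is correct and follows essentially the same route as the paper's proof: the same algorithm (resample $\wweights^{(t)}\sim\dists_{-t}$, set $\vrt=\restr{\fopt(\vwt,\wweights^{(t)})}{B_t}$, feed it to the fractional OCRS), the same conditional-independence observation that $I_t$ and $\vwt$ are independent given $\vrt$, the same application of the coordinate-weighted selectability bound, and the same use of identity~\eqref{eq:expected_marginal_frac}. The only cosmetic difference is that the paper conditions on discretized values $\vrt=\vs$ whereas you condition on $\vrt$ directly, which is an equivalent presentation.
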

\begin{proof}
	Consider the following online algorithm:	
	
\vspace{0.5 cm}
	
\begin{algorithm}[H]
\caption{Reduction from $c$-batched prophet inequality to fractional $c$-batched OCRS}
\label{alg:fractional}
\begin{algorithmic}[1]
	\FOR{$t\in\{1,...,T\}$} 
	\STATE Let $\vwt$ be the weights of elements in $B_t$
	\STATE Resample the weights $\wweights^{(t)} \sim \dists_{-t}$ 
	\STATE Let $\vrt\gets \restr{\fopt(\vwt,\wweights^{(t)})}{B_t}$\quad\quad\quad\quad\quad  (i.e., $\vrt_e=\fopt(\vwt,\wweights^{(t)})_e$ for each $e\in B_t$).
	\STATE $I_t \gets$  $c$-OCRS($B_1,\ldots , B_t,\vrt[1],\ldots,\vrt$)
	\ENDFOR
	\STATE Return $I=\bigsqcup_{t \in [T]} I_t$
	\end{algorithmic}
\end{algorithm} 

\vspace{0.5 cm}

Without loss of generality, we may assume that the values of the sampling scheme are discretized, i.e., there are only countably many values in $[0,1]^{B_t}$ that $\vrt$ can take. Then
\begin{eqnarray*}
	\Exlong[\weights,R,I]{\weights(I)} & = &  
	\sum_{t \in[T]}  \Exlong[\weights,R,I]{\weights(I_t)}  \\
	 & =  & \sum_{t \in[T]} \sum_{\vs\in[0,1]^{B_t}} \Exlong[\weights,R,I]{\weights(I_t)~\Big \vert~ \vrt=\vs} \Prl[\weights,R]{\vrt=\vs}\\
	 & = & \sum_{t \in[T]} \sum_{\vs\in[0,1]^{B_t}} \Exlong[\weights,I]{\sum_{e \in B_t}w_e \cdot \Prl[I]{e \in I_t}~\bigg\vert~ \vrt=\vs} \Prl[\vwt,\vrt]{\vrt=\vs}\\
	  & \stackrel{\eqref{eq:cocrs}}{\ge} & \sum_{t \in[T]} \sum_{\vs\in[0,1]^{B_t}} \Exlong[\vwt]{\sum_{e \in B_t}w_e \cdot  c\cdot\vs_e ~\bigg\vert~ \vrt=\vs} \Prl[\vwt,\vrt]{\vrt=\vs}\\
	 & = & c \sum_{t \in[T]} \sum_{\vs\in[0,1]^{B_t}} \Exlong[\vwt]{\iprod{\vwt}{\vs} ~\big\vert~ \vrt=\vs} \Prl[\vwt,\vrt]{\vrt=\vs} \\ 
	 & = & c \sum_{t \in[T]}  \Exlong[\vwt,\vrt]{\iprod{\vwt}{\vrt} } \\ 
	 & \stackrel{\eqref{eq:expected_marginal_frac}}{=} & c \sum_{t \in[T]}  \Exlong[\weights]{\iprod{\vwt}{\restr{\fopt(\weights)}{B_t}}}
	  =  c \cdot \Exlong[\weights]{\iprod{\weights}{\fopt(\weights)}},
\end{eqnarray*}
where the third equality holds since $I_t$ and $\weights$ are independent given that $\vrt=\vs$. 
\end{proof}

\section{Stronger Benchmarks for Batched Prophet Inequality for Matching}
\label{app:implications}

In this section we show that our results for both vertex and edge arrival models hold against stronger benchmarks than the optimal integral matching. 
Specifically, for the vertex arrival model we establish guarantees against the optimal fractional matching, and for the edge arrival model, we establish guarantees against the even stronger benchmark of optimal ex-ante matching.


\subsection{Vertex Arrival: Fractional Optimum}
Let $\fopt(\weights)\eqdef\argmax_{\margmatchv\in\polymatch}\iprod{\weights}{\margmatchv}$ be the optimal fractional matching. Note that 
$$
\weights(\fopt(\weights))\eqdef\iprod{\weights}{\fopt(\weights)}\ge\weights(\opt(\weights)).
$$
In this section we show that our result for the vertex arrival model holds against the stronger benchmark of $\fopt(\weights)$, namely we can strengthen the guarantee in Definition~\ref{def:prophet} to
\[
\Exlong[\weights,I]{\weights(I)} \geq c \cdot \Exlong[\weights]{\weights(\fopt(\weights))}.
\]

Let $(B_t)_{t\in[T]}$ be a feasible ordered partition in $\bfamilyvertex$. It induces a fractional random sampling scheme $R$  with respect to the fractional optimum $\fopt$ that generates vector $\vr(\weights,\dists)\in[0,1]^E$ as defined in Section~\ref{app:reduction-fractional}. Let $\marg[uv]^{\fopt} = \Exlong{\vr_{(uv)}}$, and let $\margv^{\fopt} = (\marg[uv]^{\fopt})_{(uv) \in E}$.

%
Observe that for any edge $(uv)$, $\marg[uv]^{\fopt} = \Exlong{\fopt(\weights)_{(uv)}}$, where $\weights \sim \dists$. Therefore, $\margv^{\fopt} \in \polymatch$ (recall that $ \polymatch=\{\margmatchv ~\vert~ \forall v\in V~~ \sum_{u\in V}\margmatch[(uv)]\le 1,~\forall e\in E~~ \margmatch\ge 0\}$). We also observe that 
\begin{equation}
\sum_{u<v} \vr^{v}_{(uv)} \leq 1 \quad \mbox{for every $v \in V$ and every realization } \vr^v \in [0,1]^{B_v}, \label{eq:frac}
\end{equation}
since $\vr^{v}$ is a projection of a fractional matching on $B_v$.

With these two properties, in Appendix~\ref{sec:vertex-arrival-fractional} we construct a $\frac{1}{2}$-batched fractional OCRS for vertex arrival model, which implies a $\frac{1}{2}$-batched fractional prophet inequality for the maximum fractional matching.

\subsection{Edge Arrival: Ex-ante Optimum}
\label{sec:ex-ante}
As was previously observed by \cite{LeeS18}, for the special case in which each batch consists of a single element, one can provide the guarantees with respect to the stronger benchmark of the optimal ex-ante solution.
The optimal ex-ante solution $\vy$ is defined as follows:
$$\vy= \arg\max \sum_e \Exlong[w_e]{w_e | w_e \geq F_e^{-1}(1-y_e)}  \cdot y_e \quad \mbox{ subject to } \vy \in \polymatch.$$
Let $R^{ex-ante}(\weights,\dists)=\{e \mid w_e \geq F_e^{-1}(1-y_e)\}$. 

By definition, the distribution of $R^{ex-ante}(\weights,\dists)$ is a product distribution of $R_e^{ex-ante}(w_e,\disti[e])$. 
Therefore, any $c$-OCRS with respect to $R^{ex-ante}(\weights,\dists)$ gives us a prophet inequality algorithm with competitive ratio $c$ with respect to the optimal ex-ante solution. Specifically, our $0.337$-OCRS from Section~\ref{sec:edge-arrival} implies a $0.337$-competitive algorithm for the prophet inequality problem against the ex-ante optimum. 
Unfortunately, the reduction from general \emph{batched} prophet inequalities to \emph{batched} OCRSs does not work for ex-ante benchmark. E.g., it does not even work for the vertex arrival setting of our paper. 


\section{A $1/2$-Batched OCRS for Fractional Matching with Vertex Arrival}
\label{sec:vertex-arrival-fractional}

In what follows we extend our construction from Section \ref{sec:vertex-arrival} to a $1/2$-batched fractional OCRS for vertex arrival. Let $\vr=(\vrt[1],\ldots\vrt[|V|])$ be independent random variables over $[0,1]^{B_{i}}$.
Let $\marg[uv]=\Exlong{\vrt[v]_{(uv)}}$, and $\margv =(\marg[uv])_{(uv) \in E}$.

We write $u<v$ if vertex $u$ arrive before vertex $v$. 

\begin{theorem}
	If $\vr$ satisfies the following two conditions: 
	\begin{equation}
	 \sum_{u}\marg[uv]  \leq 1 \quad \mbox{for every } v \in V  \label{eq:sum_marg}
	\end{equation}
	\begin{equation}
	\sum_{u<v} \vr^{v}_{(uv)} \leq 1 \quad \mbox{for every $v \in V$ and every realization } \vr^v \in [0,1]^{B_v} \label{eq:sumruv}
	\end{equation} 
	Then, $\vr$ admits a $1/2$-batched fractional OCRS for the $\bfamilyvertex$ structure of batches..
\label{thm:frac-ocrs-vertex}
\end{theorem}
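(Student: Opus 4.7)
The plan is to mimic the proof of Theorem~\ref{thm:ocrs-vertex} essentially verbatim, replacing the implicit rounding provided by $R_v$ being a (random) singleton with an explicit rounding step performed on the fractional vector $\vrt[v]$. The fractional analog of condition~\eqref{eq:sumruv_integral} is precisely condition~\eqref{eq:sumruv}: it says that each realized $\vs\in[0,1]^{B_v}$ specifies a sub-probability distribution over the previously arrived vertices $u<v$, so we can sample a single candidate neighbor from it.

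Concretely, upon the arrival of vertex $v$, I would first reveal $\vrt[v]=\vs$, then sample a candidate vertex $u^{\star}\in\{u : u<v\}\cup\{\text{null}\}$ with $\Pr[u^{\star}=u]=\vs_{(uv)}$ (which is a valid distribution by~\eqref{eq:sumruv}), and finally, if $u^{\star}\ne\text{null}$ and $u^{\star}$ is still unmatched, include edge $(u^{\star}v)$ in the output with probability
\[
\alpha_{u}(v) \eqdef \frac{1}{2-\sum_{z<v}\marg[uz]} \le 1,
\]
exactly as in~\eqref{eq:alpha_uv_vertex_integral}; the bound $\alpha_u(v)\le 1$ still follows from~\eqref{eq:sum_marg}. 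Note that $I_v$ is either empty or a singleton, so $I=\bigsqcup_v I_v$ is indeed a matching.

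Next, by induction on the arrival order, I would prove that $\Pr[(uz)\text{ matched}]=\marg[uz]/2$ for every edge $(uz)$ with $z<v$, which gives the familiar expression
\[
\Pr[u\text{ unmatched at }v] \;=\; 1-\tfrac{1}{2}\sum_{z<v}\marg[uz],
\]
mirroring~\eqref{eq:u_unmatched_integral}. The induction step uses the fact that the matched status of $u$ at time $v$ depends only on $\vrt[1],\ldots,\vrt[v-1]$ and is therefore independent of $\vrt[v]$; combined with $\E[\vr^v_{(uv)}]=\marg[uv]$ and the definition of $\alpha_u(v)$, the product $(1-\tfrac12\sum_{z<v}\marg[uz])\cdot\marg[uv]\cdot\alpha_u(v)$ collapses to $\marg[uv]/2$, closing the induction.

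Selectability then falls out directly: conditioning on $\vrt[v]=\vs$, the candidate is $u$ with probability $\vs_{(uv)}$, and by the independence noted above,
\[
\Pr[(uv)\in I_v \mid \vrt[v]=\vs] \;=\; \vs_{(uv)}\cdot\left(1-\tfrac{1}{2}\sum_{z<v}\marg[uz]\right)\cdot\alpha_u(v) \;=\; \tfrac{1}{2}\,\vs_{(uv)},
\]
which is the required fractional $1/2$-selectability from Definition~\ref{def:batched_ocrs_frac}. There is no real obstacle here: the only subtlety is that the candidate-selection step be well-defined, which is exactly what condition~\eqref{eq:sumruv} guarantees pointwise over all realizations of $\vrt[v]$, not just in expectation. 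Everything else is a line-by-line translation of the integral argument.
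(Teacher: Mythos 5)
Your proposal is correct and follows essentially the same route as the paper's proof: the same $\alpha_u(v)=\frac{1}{2-\sum_{z<v}\marg[uz]}$, the same induction giving $\prob{(uz)\text{ matched}}=\marg[uz]/2$, and the same final computation of the conditional selectability. The only (immaterial) difference is that you implement the selection in two stages---sample a candidate $u^{\star}$ from the sub-distribution $\vs$ over all $u<v$, then filter by unmatched status and accept with probability $\alpha_u(v)$---whereas the paper draws directly among the unmatched $u<v$ with probability $\vs_{(uv)}\alpha_u(v)$; both are well-defined via condition~\eqref{eq:sumruv} and yield identical conditional selection probabilities.
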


Note that $\vr$ as defined in Appendix \ref{app:implications} for the vertex arrival model satisfies Equations \eqref{eq:sum_marg},\eqref{eq:sumruv}.

\begin{proof}
Upon the arrival of a vertex $v$, we compute  $\alpha_{u}(v)$ for every $u<v$ as follows:
\begin{equation}
\alpha_{u}(v) \eqdef \frac{1}{2 - \sum_{z<v}\marg[uz] }\leq
\frac{1}{2 - \sum_{z}\marg[uz] }
 \stackrel{\eqref{eq:sum_marg}}{\leq} 1.
\label{eq:alpha_uv_vertex}
\end{equation}
Note that $\alpha_u(v)$ cannot be calculated before the arrival of $v$.
We claim that the following algorithm is a $\frac{1}{2}$-batched fractional OCRS with respect to $\vr$: 
\begin{algorithm}
	\caption{$1/2$-batched fractional OCRS for vertex arrival}
	\label{alg:vertex-frac}
	\begin{algorithmic}[1]
		\FOR{$v\in\{1,...,|V|\}$} 
		\STATE Calculate $\marg[uz]=\prob{(uv) \in R}$ for all $u,z<v$ and $\alpha_u(v)$ for all $u<v$.
		\STATE Among all unmatched $u<v$, choose one $u$ (or none) with probability $\vrt[v]_{(uv)}\cdot \alpha_{u}(v)$.
		\STATE If $u$ was chosen, then match $(uv)$.
		\ENDFOR
	\end{algorithmic}
\end{algorithm}

We first show that Algorithm~\ref{alg:vertex-frac} is well defined; namely, that (i) Algorithm~\ref{alg:vertex-frac} matches not more than one edge incident to $u$ and $v$, and (ii) for every $v$, we can match each available vertex $u<v$ with probability $\vr^v_{(uv)}\cdot \alpha_{u}(v)$.
The worst case is where all previous vertices are available. Thus, a sufficient condition is that $\sum_{u<v}\vr^v_{(uv)}\cdot \alpha_{u}(v) \leq 1$. 
Indeed,
$$ \sum_{u<v}\vr^v_{(uv)}\cdot \alpha_{u}(v) \stackrel{\eqref{eq:alpha_uv_vertex}}{\leq } \sum_{u<v}\vr^v_{(uv)}
\stackrel{\eqref{eq:sumruv}}{\leq} 1.$$
It remains to show that Algorithm \ref{alg:vertex-frac} is a $1/2$-batched fractional OCRS with respect to $\vr$. We fix the vertex arrival order $\sigma$. We prove by induction (on the number of vertices $|V|$) that  $\prob{(uv) \mbox{ is matched}}=\frac{\marg[uv]}{2}$. The base of induction for $|V|=0$ is trivially true. To complete the step of induction, we assume that 
$\prob{(uz) \mbox{ is matched}}=\frac{\marg[uz]}{2}$ for all $u,z<v$ and will show that $\prob{(uv) \mbox{ is matched}}=\frac{\marg[uv]}{2}$ for all $u<v$. In what follows, we say that ``$u$ is unmatched {\em at $v$}" if $u$ is unmatched {\em right before} $v$ arrives. 

%
%
%
\begin{equation}
\prob{u \mbox{ is unmatched at }v} = 1-\sum_{z<v}\prob{(uz) \mbox{ is matched}} = 1-\frac{1}{2}\sum_{z<v}\marg[uz].  \label{eq:u_unmatched} 
\end{equation}

Therefore,
\begin{eqnarray}
\prob{(uv) \mbox{ is matched}} & = & \prob{u \mbox{ is unmatched at }v}   \cdot \Exlong{\vr^v_{(uv)} \cdot \alpha_{u}(v) }\nonumber \\ 
 & \stackrel{\eqref{eq:alpha_uv_vertex},\eqref{eq:u_unmatched}}{=} &\left(1-\frac{1}{2}\sum_{z<v}\marg[uz]\right)\cdot \frac{1}{2 - \sum_{z<v}\marg[uz] } \cdot \marg[uv]\nonumber \\
& = & \frac{\marg[uv]}{2}. \nonumber
\label{eq:pr-e-matched_vertices}
\end{eqnarray}

To conclude the proof that Algorithm \ref{alg:vertex-frac} is a $\frac{1}{2}$-batched fractional OCRS with respect to $\vr$, we show that for every $u<v$ and every $\vs\in [0,1]^{B_v}$
\begin{multline*}
\prob{(uv) \in I_v \mid \vr^v =\vs}  =  \prob{u \mbox{ is unmatched at }v} \cdot  \vs_{(uv)} \cdot \alpha_u(v)  \\
 \stackrel{\eqref{eq:alpha_uv_vertex},\eqref{eq:u_unmatched}}{=} \left(1-\frac{1}{2}\sum_{z<v}\marg[uz]\right)\cdot \frac{1}{2 - \sum_{z<v}\marg[uz] } \cdot \vs_{(uv)} =  \frac{\vs_{(uv)}}{2}.\quad\quad\quad \qedhere
\end{multline*}
%
\end{proof}

\section{Other definition of batched-OCRS: bad example}
\label{sec:bad_batched_OCRS}
Here, we discuss why a natural generalization of the previous OCRS for singletons to batched-OCRS in which one simply requires that $\prob[R,I]{e \in I} \ge c \cdot x_e$ instead of Equation~\eqref{eq:cocrs_int} might be problematic. In particular, the standard reduction from a $c$-selectable OCRS to $c$-competitive prophet inequality might not work in the batched setting for such definition of $c$-selectable batched-OCRS.

Consider the following example with 4 elements $\{1,2,3,4\}$ and the downward closed family of feasible sets 
$\{\{1,3\},\{1,4\},\{2,3\},\{2,3\},\{2,4\},\{3,4\},\{1\},\{2\},\{3\},\{4\},\emptyset\}$, i.e., the maximal feasible sets have all possible 2 element 
subsets of $\{1,2,3,4\}$ except the subset $\{1,2\}$. The elements arrive in two fixed batches: $B_1=\{1,2\}$ and $B_2=\{3,4\}$.
We also consider a respective Prophet Inequality setting, in which all elements have weights independently distributed according to $\weights\sim\dists=\prod_{i=1}^4 F_i$, where
\[
F_1=F_2: \Prl[w\sim F_1]{w=\eps}=1\quad\quad\quad\text{and}\quad\quad\quad
F_3=F_4: \Prl[w\sim F_3]{w=1}=0.5,\quad \Prl[w\sim F_3]{w=0}=0.5,
\]
for some very small $\eps$. The optimum solution $\opt(\weights)$ picks the set $\{3,4\}$ if $w_3=w_4=1$, and  otherwise picks a set of size $2$ with exactly one element among $\{1,2\}$ and the larger element among $\{3,4\}$. The expected weight of the optimum is 
\[
\Exlong[\weights\sim\dists]{\weights(\opt(\weights))}=\Exlong[\weights\sim\dists]{w_3}+\Exlong[\weights\sim\dists]{w_4}+O(\eps)=1+O(\eps).
\]
The standard sampling scheme $R=R_1\sqcup R_2$ for the reduction from the Prophet inequality to OCRS observes the weights in the current batch and resample the weights of the remaining elements $\wweights^{(t)}\sim\dists_{-t}$; then it takes the set $R_t=B_t\cap\opt(\vwt,\wweights^{(t)})$ for $t\in\{1,2\}$. In our case, 
\[
R_1=\begin{cases}
\{1\} & \text{ with probability } \frac{3}{8}\\
\{2\} & \text{ with probability } \frac{3}{8}\\
\emptyset & \text{ with probability } \frac{1}{4}
\end{cases}
\quad\quad\quad
R_2=\begin{cases}
\{3\} & \text{ with probability } \frac{3}{8}\\
\{4\} & \text{ with probability } \frac{3}{8}\\
\{3,4\} & \text{ with probability } \frac{1}{4}
\end{cases}
\]
The marginal probability of the elements to be sampled in $R$ are as follows: 
\[
x_1=\prob{1\in R}=x_2=\frac{3}{8}\quad\quad\quad\quad x_3=\prob{3\in R}=x_4=\frac{5}{8}.
\]  
For such a sampling scheme $R$ one can achieve a pretty good $c$-selectable OCRS with $c=\frac{17}{20}$, by following a simple greedy algorithm that includes as many elements from $R_t$ into a feasible set $I$ as it can at each stage $t$.
In particular, this greedy OCRS would always select elements $1$ and $2$, whenever $1$, or $2$ are included in $R_1$, i.e.,
\[
\Prl[R,I]{1 \in I} = \Prl[R,I]{1 \in R}= x_1\quad\quad\quad
\Prl[R,I]{2 \in I} = \Prl[R,I]{2 \in R}= x_2.
\]
Sometimes greedy algorithm won't be able to take both $3$ and $4$ into $I$ if $R_1\neq\emptyset$, in which case it will flip a coin and take one of the $3$ or $4$ uniformly at random. Thus to calculate $\prob{3\in I}$ (similarly $\prob{4\in I}$) we consider two cases $R_1=\emptyset$ and $|R_1|=1$ and get
\begin{multline*}
\prob{3\in I}=\prob{R_1=\emptyset}\cdot\InParentheses{\vphantom{\frac{1}{2}}\prob{R_2=\{3,4\}~\vert~R_1=\emptyset}+\prob{R_2=\{3\}~\vert~R_1=\emptyset}}\\
+\prob{|R_1|=1}\cdot\InParentheses{\frac{1}{2}\prob{R_2=\{3,4\}~\vert~|R_1|=1}+\prob{R_2=\{3\}~\vert~|R_1|=1}}\\
=\frac{1}{4}\cdot\InParentheses{\frac{1}{4}+\frac{3}{8}}+\frac{3}{4}\cdot\InParentheses{\frac{1}{2}\cdot\frac{1}{4}+\frac{3}{8}}=\frac{17}{32}=\frac{17}{20}\cdot \frac{5}{8}=\frac{17}{20}\cdot x_3.
\end{multline*}
Now, if we try to convert this greedy $\frac{17}{20}$-selectable OCRS into a prophet inequality algorithm $\alg$ that selects a set $I$ with the matching marginal probabilities $\prob[I]{1\in I}=\prob[I]{2\in I}=\frac{3}{8}$, then its competitive ratio will be noticeably smaller than $\frac{17}{20}$. Indeed,
\begin{multline*}
\Exlong[\weights, I]{\alg(\weights)}=\prob{I\cap B_1=\emptyset}\cdot
\Exlong[\weights, I]{w_3+w_4~\vert~I\cap B_1=\emptyset}\\
+\prob{|I\cap B_1|=1}\cdot
\Exlong[\weights,  I]{\max(w_3,w_4)~\vert~|I\cap B_1|=1}+O(\eps)\\
=\frac{1}{4}\cdot\Exlong[\weights\sim\dists]{w_3+w_4}+\frac{3}{4}\cdot\Exlong[\weights\sim\dists]{\max(w_3,w_4)}+O(\eps)\\
=\frac{1}{4}\cdot\InParentheses{\Ex{w_3}+\Ex{w_4}}+\frac{3}{4}\cdot 1\cdot\prob{w_3=1\text{ or }w_4=1}+O(\eps)\\
=\frac{1}{4}\cdot 1 +\frac{3}{4}\cdot\frac{3}{4}+O(\eps)=\frac{13}{16}+O(\eps)=\InParentheses{\frac{13}{16}+O(\eps)}\cdot\Exlong[\weights\sim\dists]{\weights(\opt(\weights))},
\end{multline*}
i.e., the corresponding algorithm is only $\frac{13}{16}$-competitive, while we would like to have $c$-competitive algorithm with the same $c=\frac{17}{20}$ as the $c$-selectable OCRS we constructed before.

\section{Pricing Approach:  $\frac{1}{4}$ upper bound}
\label{sec:pricing_algorithm}
In this appendix we present a natural extension of the pricing-based algorithm of~\citet{FeldmanGL15} to the case of two-sided vertex arrival in bipartite matching, and show that it does not achieve a competitive ratio better than $\frac{1}{4}$. Upon arrival of a vertex $v$, the algorithm sets its price $p_v$ to be a half of the expected future contribution (to the optimum matching) of future edges incident to $v$. It then considers an edge $(uv)$ only if its weight covers the sum of the prices of its end points (i.e., $w_{uv}>p_u + p_v$). Among those, it chooses the one that maximizes $w_{uv} - p_u - p_v$. This algorithm appears as Algorithm \ref{alg:pricing} below, where $OPT(\weights)$ denotes the max-weight matching under weights $\weights$, and $u<v$ denotes that vertex $u$ arrives before vertex $v$. 
%
%
%

\begin{algorithm}
	\caption{Dynamic Pricing Algorithm\label{alg:pricing}}
	Let
	$
	p_v = \frac{1}{2}\sum_{u > v}\Ex[\weights \sim \dists]{w_{uv}\cdot \I[(uv)\in \opt(\weights)]}
	$.
	
	Let $k \in \argmax_{u<v, u \text{ unmatched}}\{w_{uv}-p_u\}$.
	
	If $w_{vk}-p_u \geq p_v$, then include $(vk)$ in the matching.
\end{algorithm}

The example depicted in Figure \ref{fig:upper-bounds14} shows that the competitive ratio of Algorithm~\ref{alg:pricing} is at most $\frac{1}{4}$. 
In this example, the expected maximum weight matching is $4-\frac{4\epsilon}{1+\epsilon}$ (by taking edge $(cd)$ if $w_{cd}>0$, and taking $(ac)$ otherwise).
Suppose the arrival order is $a,b,c,d$.
The prices calculated according to Algorithm~\ref{alg:pricing} under this arrival order are $p_a=\frac{1-\epsilon}{1+\epsilon}, p_b=0, p_c=1,p_d=0$. Given these prices, Algorithm~\ref{alg:pricing} always chooses the edge $(bc)$, which gives approximately $\frac{1}{4}$ of the expected maximum weight matching. 
\begin{figure}[H]
	\includegraphics[width=0.2\textwidth]{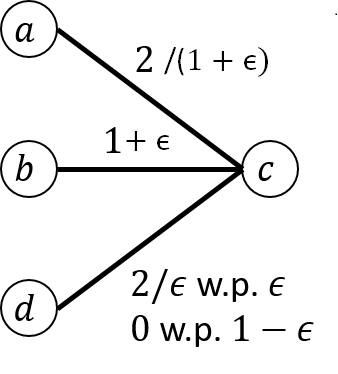}
	\caption{An upper bound of $1/4$ on the pricing-based algorithm (Algorithm~\ref{alg:pricing}) for max-weight matching with two-sided vertex arrivals.
		\label{fig:upper-bounds14}}
\end{figure}

\end{document}